\newif\ifdraft \drafttrue   %% 作業用
\newcommand{\shinocom}[1]{\textcolor{magenta}{[(篠原)#1]}}
\newcommand{\usercom}[1]{\textcolor{blue}{[(user)#1]}}
\newcommand{\todo}[1]{{\color{red}{[ToDo: #1]}}}
\newcommand{\shinocom}[1]{}
\newcommand{\yoshicom}[1]{}
\newcommand{\usercom}[1]{}
\newcommand{\todo}[1]{}
\setlist[itemize]{topsep=4pt, itemsep=2pt, parsep=1pt,leftmargin=12pt}
\setlist[enumerate]{topsep=4pt, itemsep=2pt, parsep=1pt,leftmargin=16pt}
\newcommand{\next}{\mathit{next}}
\theoremstyle{definition}
\newtheorem{definition}{Definition}
\theoremstyle{plain}
\newtheorem{theorem}{Theorem}
\newtheorem{lemma}{Lemma}
\newtheorem{fact}{Fact}
\newcommand{\insertsep}[2]{%
  \def\result{}%
  \def\sep{#2}%
  \insertsephelper#1\relax
  \result
}
\def\insertsephelper#1{%
  \ifx#1\relax
  \let\next\relax
  \else
    \ifx\result\empty
      \edef\result{#1}%
    \else
      \edef\result{\result\sep#1}%
    \fi
    \let\next\insertsephelper
  \fi
  \next
}
\newcommand{\cnum}[1]{{}\raise0.2ex\hbox{\textcircled{\scriptsize$\insertsep{#1}{\!\;\!}$}}}
\def\ps@headings{
	\def\@oddhead{		}
	\def\@evenhead{}
	\def\@evenfoot{\hfil \,\,\thepage\hfill}
	\def\@oddfoot{\hfil \,\,\thepage\hfil}
}
\def\ps@titleheadings{
	\def\@oddhead{}
	\def\@evenhead{}
	\def\@evenfoot{\hfil \,\,\thepage\hfill}
	\def\@oddfoot{\hfil \,\,\thepage\hfil}
}
\title{
	\textbf{Misère Greedy Nim and \\ Misère Bounded Greedy Nim }}
\author{
	Nanako Omiya\thanks{Tohoku University} \and
	Ryo Yoshinaka\footnotemark[1] \and
	Ayumi Shinohara\footnotemark[1]
	%「\footnotemark[N]」で第N著者と同じマークが付きます．
}
\date{\empty}
\begin{document}
\maketitle

%%%%%%%%%%%%%%%%%%%% 論文本体 %%%%%%%%%%%%%%%%%%%%
% \newcommand{\chapter}[1]{Introduction}
\begin{abstract}
    In this paper, we analyze the misère versions of two impartial combinatorial games: $k$-Bounded Greedy Nim and Greedy Nim. 
    We present a complete solution to both games by showing necessary and sufficient conditions for a position to be $P$-positions.
\end{abstract}
\section{Introduction}

A game in which two players take turns making moves without any chance elements or hidden information is called a \emph{combinatorial game}. 
Furthermore, a combinatorial game in which both players have the same possible moves on the game board is called an \emph{impartial game}.

One of the most famous impartial games is Nim~\cite{bouton1901nim}.
Nim is a stone removal game where players take turns removing stones from a heap. 
In the initial state, there are one or more heaps of stones on a table. 
The two players take turns removing at least one stone from one of the heaps, and the player who cannot move loses the game.
Impartial games can theoretically be analyzed, and there is always a winning strategy. 

Since Bouton's complete analysis of Nim in 1902~\cite{bouton1901nim}, 
several variants have been proposed and studied in the literature. 
For example, $k$-Bounded Nim, proposed by Schwartz~\cite{Schwartz01111971}, restricts the number of stones removed on each move to be less than or equal to a given constant $k$.
In Greedy Nim, introduced by Albert and Nowakowski~\cite{albert2004nim}, players must always remove stones from the heap with the most stones.
Wythoff Nim (Wythoff's Game)~\cite{wythoffgame} requires players to either take at least one stone from one heap or the same number of stones from both heaps.
The Maya Game (Welter's Game)~\cite{welter1952advancing} is a variant of Nim that prohibits making heaps with the same number of stones. 
In addition, some variants allow splitting or merging heaps. 
For instance, Kayles~\cite{dudeney2002canterbury} allows players to remove one or two stones and then split the heap into two. 
In Amalgamation Nim~\cite{locke2021amalgamation}, players merge two heaps instead of removing stones.
In 2018, Xu and Zhu~\cite{xu2018bounded} proposed $k$-Bounded Greedy Nim, which is a combination of $k$-Bounded Nim and Greedy Nim.

Impartial games like Nim, in which the player who cannot make a move loses, are called \emph{normal play} games.
Conversely, games in which the player unable to make a move wins are called \emph{misère play} games. 
Misere Nim was also fully analyzed by Bouton. 
However, it is generally known that misère games are more difficult to analyze than normal games~\cite{Grundy_Smith_1956}.
For example, Dawson's Chess~\cite{dawson1935caissa} has been completely analyzed in its normal play, 
but a complete misère analysis has not been found~\cite{PLAMBECK2008593}. 

A powerful approach to analyzing such games involves using \emph{misère quotients}, a specialized commutative monoid and its homomorphisms, as proposed by Plambeck and Siegel~\cite{PLAMBECK2008593}.
In impartial games, positions can sometimes be divided into multiple independent subgames. During their turn, a player selects one of these subgames to make a move, leaving the others unchanged.
For example, in standard Nim, a position $S=(3,2,1)$ can be divided into three independent positions:
$X=(3)$, $Y=(2)$, $Z=(1)$.
To use misère quotients, positions can be divided into independent subgames.
However, this is not possible in the case of Greedy Nim because its rules prevent dividing a position into independent positions.
In Greedy Nim, a player must remove stones from the heap with the largest number of stones, meaning they cannot arbitrarily select a subgame to act upon.
For instance, in $S=(3,2,1)$, it is not possible to make a move affecting only $Y=(2)$, as doing so would violate the game's rules.
Similarly, $k$-Bounded Greedy Nim cannot be analyzed using misère quotients for the same reason. 

Yamasaki~\cite{yamasaki1980misere} analyzed various impartial games in misère play, such as Sato's Maya Game (Welter's game)~\cite{welter1952advancing} 
and Wythoff Nim~\cite{wythoffgame}, by determining the so-called singular positions and standard positions, defined as follows:
\begin{definition}
  \label{def1}
  A position $S$ is a \emph{singular position} if its outcome differs between normal and misère play. 
  Conversely, a \emph{standard position} is a position where the outcome remains the same in both versions.
\end{definition}

If a game is completely analyzed in its normal play, determining all singular and standard positions is equivalent to completely analyzing it in its misère play.
For Nim, it is known that position $(x_1,x_2,\dots,x_n)$ where $x_1 \leq 1$ are singular positions, while positions where $x_1 \geq 2$ are standard positions~\cite{bouton1901nim}.
However, especially for Nim variants that allow heap splitting, such as Kayles~\cite{dudeney2002canterbury}, determining singular positions is significantly more challenging.

In this paper, we solve misère $k$-Bounded Greedy Nim and misère Greedy Nim by giving a characterization of all the $P$-positions for these games. 
% Furthermore, by comparing the results of the normal and misère play, we can determine the singular positions and standard positions for these games.

% This paper presents conditions for efficiently determining winning and losing positions in misère $k$-Bounded Greedy Nim and misère Greedy Nim and gives their proofs.

%winning and losing position?

\section{Preliminaries}

\subsection{Position}

A collection of heaps of stones is represented as a sequence of non-negative integers, $S = (x_1, x_2, \dots, x_n)$. 
This means that there are $n$ heaps on the game board, and the $i$th heap from the left contains $x_i$ stones.
Note that we allow $x_i$ to be zero.
Throughout this paper, we assume without loss of generality that $n \ge 4$.
In addition, we always arrange the sequence in descending order such that $x_1 \geq x_2 \geq \dots \geq x_n$.
We call such a sequence a \emph{position}.

\subsection{N-Position and P-Position}
A position $S$ can transition to another position $S'$ by performing a single valid move, and such a position $S'$ is called a \emph{follower position} of $S$.
In impartial games, each position can be classified into one of two categories: N-positions and P-positions, which together form what is known as the \emph{outcome}.
An \emph{N-position} is a position such that the next player has a winning strategy.
On the other hand, a position is a \emph{P-position} if the previous player has a winning strategy.
We denote by $\mathbb{P}$ the set of all P-positions. 
In Nim, it is well known that a position is a P-position if and only if the bitwise XOR of the numbers of stones in all heaps is zero~\cite{bouton1901nim}. 
This XOR operation is computed by converting the number of stones in each heap into binary and summing them without carrying over.

In normal play, the position $S_0=(0,0,\dots,0)$ with no stones is a P-position.
In misère play, however, the position $S_0$ is an N-position. 
For any other position, if there exists at least one P-position among its follower positions, then the current position is an N-position. 
Conversely, if all follower positions are N-positions, then the current position is a P-position.
Thus, whether a position is an N-position or a P-position can be determined inductively.
However, when the numbers of heaps and stones are large, determining all reachable positions requires a huge amount of computational time.
This study aims to reveal the properties of all P-positions in misère $k$-Bounded Greedy Nim and misère Greedy Nim, 
allowing for an efficient linear classification of positions as N-positions or P-positions. 
Additionally, it seeks to identify the optimal strategy for N-positions.

\section{Misère $k$-Bounded Greedy Nim}
\subsection{The Rule of Misère $k$-Bounded Greedy Nim}
Let $k$ be a fixed positive integer. 
The rules of misère $k$-Bounded Greedy Nim are as follows. 
The starting position consists of several heaps of stones, with each heap containing an arbitrary number of stones. 
Two players take turns, and on each turn, a player must remove at most $k$ stones from the heap with the largest number of stones. 
The player who cannot make a move wins the game.

% To make the proofs in the following sections clearer, we redefine the rules as follows: 
% The starting position consists of several heaps of stones arranged in a row, with each heap containing an arbitrary number of stones.
% However, the heaps are always arranged in descending order.
% Let $k$ be a fixed positive integer. The two players take turns performing the following sequence of operations:

% Operation A: Remove at most $k$ stones from the leftmost heap.

% Operation B: Reorder the heaps in descending order.

% The player who is unable to perform operation A wins the game.

% From this point on, 
% the combination of operation A and operation B will be referred to as an operation.

\subsection{Solution for $k$-Bounded Greedy Nim}
If $k=1$, exactly one stone is removed in each turn. Therefore, the following theorem holds.
\begin{theorem}
\label{theorem1}
A position $S$ is a P-position for normal play $1$-Bounded Greedy Nim if and only if the total number of all the stones is even. 
Similarly, a position $S$ is a P-position for misère $1$-Bounded Greedy Nim if and only if the total number of all the stones is odd.
\end{theorem}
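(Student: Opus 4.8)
The plan is to exploit the fact that when $k=1$ a legal move is extremely rigid: as in ordinary Nim a move removes a positive number of stones, so with $k=1$ exactly one stone is removed, taken from whichever heap is currently largest. Writing $N = x_1 + \dots + x_n$ for the total number of stones in $S$, it follows that after any single move the total becomes $N-1$, that a move is available precisely when $N \ge 1$ (because the largest heap is nonempty exactly when $N\ge 1$), and that the only terminal position reachable from $S$ is $(0,\dots,0)$. In other words, every play starting from $S$ lasts exactly $N$ moves, independently of the players' choices.

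From here I would finish by a direct parity count (equivalently, by induction on $N$ using the inductive characterisation of P- and N-positions recalled in the Preliminaries). In normal play the player who faces $(0,\dots,0)$ loses; after $N$ moves have been made it is again the first player's turn precisely when $N$ is even, so that player loses exactly when $N$ is even. Hence $S$ is a P-position for normal $1$-Bounded Greedy Nim exactly when $N$ is even. In misère play the end condition flips: the player facing $(0,\dots,0)$ wins, so the first player wins exactly when $N$ is even, and therefore $S$ is a P-position exactly when $N$ is odd.

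If an inductive write-up is preferred, the base case is $N=0$, where $(0,\dots,0)$ is a P-position in normal play and an N-position in misère play, matching the claimed parities. For the inductive step, every follower of $S$ has total $N-1$, hence by the induction hypothesis every follower carries the opposite parity classification; the rule ``$S$ is a P-position iff every follower is an N-position'' then propagates the statement from $N-1$ to $N$, and this works verbatim in both conventions since only the base case distinguishes them.

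There is no genuine obstacle here; the two points that need a moment's care are (i) confirming that a move exists if and only if $N \ge 1$, and (ii) keeping the two end-conventions straight so that the misère case comes out with the complementary parity rather than the same one.
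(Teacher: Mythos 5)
Your proposal is correct and matches the paper's (implicit) argument: the paper proves Theorem~\ref{theorem1} simply by noting that when $k=1$ exactly one stone is removed per turn, so the game lasts exactly $N$ moves and the outcome is determined by the parity of the total, exactly as you argue. Both the direct parity count and the inductive phrasing you sketch are fine; there is no gap.
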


In the following, we assume $k\geq 2$.

For normal play $k$-Bounded Greedy Nim, Xu and Zhu~\cite{xu2018bounded} provided a complete analysis, proving Theorem~\ref{theorem2}.

\begin{definition}[\cite{xu2018bounded}]
\label{def2}
For any position $S=(x_1,x_2,\dots,x_n)$, define
  \begin{equation}
    \beta(S)= 
        \begin{cases}
            0              &  \text{if $x_3=0$},\\
            \text{max}\{j:x_j=x_3\}-2 &  \text{if $x_3 \neq 0$}, \nonumber
        \end{cases}
    \end{equation}
    which is the number of  repetitions of the value of $x_3$ in the sequence $(x_3,x_4,\dots,x_n)$ unless $x_3=0$.
\end{definition}
\noindent

For any integer $a$, let $R(a)$ denote the remainder when $a$ is divided by $k+1$.
\begin{definition}[\cite{xu2018bounded}]
\label{def3}
  Let $x_1,x_2,x_3$ be positive integers, where $x_1 \geq x_2\geq x_3$.
  We say the triple $(x_1,x_2,x_3)$ is \emph{$k$-good} if one of the following holds:
\begin{enumerate}[label=(\arabic*),leftmargin=24pt]
\item $R(x_1-x_2 )=k $ and $R(x_2-x_3)=0$,
\item $R(x_1-x_2 )=0$ and $1 \leq R(x_2-x_3) \leq k-1$,
\item $R(x_1-x_2 )=1$ and $R(x_2-x_3)=k$.
  \end{enumerate}
    % (1) $R(x_1-x_2 )=k $ and $R(x_2-x_3)=0$.\\
  % (2) $R(x_1-x_2 )=0$ and $1 \leq R(x_2-x_3) \leq k-1$.\\
  % (3) $R(x_1-x_2 )=1$ and $R(x_2-x_3)=k$.\\
\end{definition}   

\begin{theorem}
\label{theorem2}
% Let $k\neq1$.
 A position $S=(x_1,x_2,\dots,x_n)$ is a P-position in normal play $k$-Bounded Greedy Nim if and only if one of the following hold:
\begin{itemize}
\item $\beta(S)$ is even and $R(x_1-x_2)=0$,
\item $\beta(S)$ is odd and $(x_1,x_2,x_3)$ is $k$-good.
\end{itemize}
\end{theorem}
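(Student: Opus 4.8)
The plan is to show that the set $\mathcal{A}$ of positions satisfying the two displayed conditions equals $\mathbb{P}$, by strong induction on the total number of stones $N(S)=\sum_i x_i$. The only terminal position is $S_0=(0,\dots,0)$, which lies in $\mathbb{P}$ (normal play) and in $\mathcal{A}$, since $x_3=0$ gives $\beta(S_0)=0$ (even) and $R(x_1-x_2)=R(0)=0$. For the inductive step it suffices to establish two lemmas: (Closure) every follower of a position in $\mathcal{A}$ lies outside $\mathcal{A}$; and (Escape) every position outside $\mathcal{A}$ has at least one follower in $\mathcal{A}$. Granting these, if $S\in\mathcal{A}$ then by Closure and the induction hypothesis all followers of $S$ are N-positions, so $S\in\mathbb{P}$; and if $S\notin\mathcal{A}$ then by Escape and the induction hypothesis $S$ has a P-follower, so $S$ is an N-position. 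Hence $\mathcal{A}=\mathbb{P}$.

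For the Closure lemma I would split on the parity of $\beta(S)$ and, orthogonally, on the shape of the move. A move removes $t\in\{1,\dots,k\}$ stones from a heap of size $x_1$, producing (after re-sorting) one of three shapes: (a) $x_1>x_2$ and $x_1-t\ge x_2$, so the top two become $x_1-t,x_2$; (b) $x_1>x_2$ and $x_1-t<x_2$, so the heap $x_2$ becomes the new maximum; (c) $x_1=x_2$, so one of the tied maxima is reduced. The structural fact to pin down is how $\beta$ changes: in shapes (a) and (c) with $x_1-t>x_3$ the third-largest value is untouched, so $\beta(S')=\beta(S)$; $\beta$ flips parity exactly when the move installs a new copy of $x_3$ at position $3$ or removes the last such copy (e.g. shape (b) with $x_1-t=x_3$, or shape (c) with $x_2=x_3$ and $t$ pulling $x_2$ below $x_3$). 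With this in hand, the congruences mod $k+1$ show that a move out of a $\beta$-even position with $R(x_1-x_2)=0$ either keeps $\beta$ even while forcing $R(x_1-x_2)\ne0$, or produces a $\beta$-odd position whose triple is not $k$-good; and symmetrically that no move keeps a $\beta$-odd $k$-good position both odd-$\beta$ and $k$-good — the three cases (1)--(3) of Definition~\ref{def3} are engineered precisely so that decreasing $x_1$ by any admissible $t$ and re-sorting breaks $k$-goodness.

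For the Escape lemma I would exhibit the move explicitly. If $\beta(S)$ is even and $R(x_1-x_2)=r\ne0$, then $x_1>x_2$ and removing $t=r$ stones is legal (since $x_1\ge x_1-x_2\ge r$) and yields a follower with $R(x_1-x_2)=0$ and the same $\beta$, hence in $\mathcal{A}$; the sub-cases $x_1-r=x_2$ and $x_2=x_3$ need only the observation that $\beta$ is unchanged. If $\beta(S)$ is odd and $(x_1,x_2,x_3)$ is not $k$-good, one reads off from the failed congruences which $t\in\{1,\dots,k\}$ to remove so that the new triple becomes $k$-good (possibly after $x_2$ takes over as the maximum in shape (b)); in the positions where this is impossible one instead uses a level-$3$ move that flips $\beta$ to even while forcing $R(x_1-x_2)=0$ (this is why $\beta$ odd with $x_3=0$ cannot occur and why the boundary moves reaching $x_3$ or $0$ matter). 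Each candidate move is checked to be legal ($t\le x_1$) and to land in $\mathcal{A}$, which is a finite residue computation.

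The main obstacle is the $\beta$-odd / $k$-good analysis: one must verify, across the three modular regimes (1)--(3), that \emph{every} removal of $1$ to $k$ stones from the top destroys $k$-goodness, and, for Escape, that from every non-$k$-good odd-$\beta$ position some such removal restores it. The delicate cases are the boundary moves where $x_1-t$ becomes equal to $x_2$, becomes equal to $x_3$, or reaches $0$, since those are exactly the moves that can alter $\beta$ or collapse the triple, and they must be treated separately from the "generic" moves that merely shift the top heap. Managing the coupling between these boundary cases and the residue bookkeeping is where the bulk of the work lies.
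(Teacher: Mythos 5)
First, note that the paper does not actually prove Theorem~\ref{theorem2}: it is quoted from Xu and Zhu, and the only proof in the paper with which your proposal can be compared is the proof of the misère analogue, Theorem~\ref{theorem3}. Your overall framework --- induction on the total number of stones, a ``Closure'' lemma (every follower of a position satisfying the conditions violates them) and an ``Escape'' lemma (every position violating them has a follower satisfying them) --- is exactly the structure the paper uses there, with its Lemma~\ref{claim3} on stable moves playing the role of your generic shape~(a) case and the necessity/sufficiency paragraphs playing the roles of Escape and Closure respectively. So the strategy is the right one and would succeed if executed.

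The gap is that the execution is almost entirely deferred, and the one concrete structural fact you do assert about $\beta$ is wrong. You claim that $\beta$ flips parity ``exactly when the move installs a new copy of $x_3$ at position $3$,'' citing shape~(b) with $x_1-t=x_3$ as an example. In that situation the sorted follower is $(x_2,\,x_1-t,\,x_3,\,x_4,\dots)$ with $x_1-t=x_3$, so the new copy of $x_3$ sits at position $2$ and, since $\beta$ is defined as $\max\{j:x_j=x_3\}-2$, the largest such index is unchanged: $\beta(S')=\beta(S)$, no parity flip. The parity of $\beta$ actually changes only when $x_1-t$ drops \emph{strictly below} $x_3$, and even then the behaviour splits according to whether $x_4=x_3$ (then $\beta$ decrements by one) or $x_4<x_3$ (then $x_3'$ takes a new value and $\beta(S')$ must be recomputed from scratch, possibly via the $x_3'=0$ convention). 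Since the whole proof consists of threading these boundary moves through the residue conditions of Definition~\ref{def3} --- this is where the paper's proof of Theorem~\ref{theorem3} spends essentially all of its effort, in the numbered cases \cnum{2}--\cnum{9} and \cnum{10}--\cnum{17} --- getting the $\beta$-bookkeeping wrong at the outset means the Closure and Escape verifications cannot yet be carried out as described. You correctly identify this case analysis as ``where the bulk of the work lies,'' but that work, which is the actual content of the theorem, is missing from the proposal.
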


In this paper, we prove the necessary and sufficient conditions for a position to be a P-position in misère  $k$-Bounded Greedy Nim, 
as follows:
\begin{definition}
  \label{def4}
    Let $x_1,x_2$ be positive integers, where $x_1 \geq x_2$.
    We say $(x_1,x_2)$ is \emph{$k$-nice} if one of the following holds:
    \begin{enumerate}[label=(\arabic*),leftmargin=24pt]
\item $R(x_1)=0 $ and $R(x_2)=1$,
\item $R(x_1-x_2)=0$ and $R(x_2) \geq 2$,
\item $R(x_1)=1$ and $R(x_2)=0$.
    \end{enumerate}
  \end{definition}

We now present the main result for misère $k$-Bounded Greedy Nim.
\begin{theorem}
\label{theorem3}
  % Let $k\neq1$.
  A position $S=(x_1,x_2,\dots,x_n)$ is a P-position in misère $k$-Bounded Greedy Nim position
  if and only if one of the following hold:
  \begin{itemize}
  \item $x_3\leq1$, $\beta(S)$ is even, and $(x_1,x_2)$ is $k$-nice.
  \item $x_3\leq1$, $\beta(S)$ is odd, and $R(x_1-x_2)=0$.
  \item $x_3\geq 2$, $\beta(S)$ is even, and $R(x_1-x_2)=0$.
  \item $x_3\geq 2$, $\beta(S)$ is odd, and $(x_1,x_2,x_3)$ is $k$-good.
  \end{itemize}
\end{theorem}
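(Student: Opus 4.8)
The plan is to prove Theorem~\ref{theorem3} by the standard inductive characterization of $\mathbb{P}$: letting $\mathcal{Q}$ denote the set of positions satisfying one of the four listed conditions, I will show (i) the terminal position $S_0=(0,0,\dots,0)$ lies \emph{outside} $\mathcal{Q}$ (correct, since $S_0$ is an $\mathcal{N}$-position in misère play), (ii) every move from a position in $\mathcal{Q}$ leads to a position outside $\mathcal{Q}$, and (iii) every position outside $\mathcal{Q}$ (other than irrelevant trivialities) has at least one move into $\mathcal{Q}$. By the inductive determination of outcomes recalled in the Preliminaries, (i)--(iii) together give $\mathbb{P}=\mathcal{Q}$. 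Throughout, the only legal moves remove between $1$ and $k$ stones from the (a) largest heap $x_1$, so I only ever need to track how $x_1$, $x_2$, $x_3$, and the multiplicity parameter $\beta(S)$ change; the heaps $x_4,\dots,x_n$ matter only through $\beta$.

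The key structural observation is that the four cases of $\mathcal{Q}$ are governed by two binary parameters: whether $x_3\le 1$ or $x_3\ge 2$, and the parity of $\beta(S)$. A move from the largest heap can change $x_3$ only when it causes $x_1$ to drop to (or below) the old $x_3$ — i.e., when heaps coalesce near the top — and it changes $\beta(S)$ by affecting the multiplicity of the third-largest value. I would organize the argument into regimes according to the relation among $x_1,x_2,x_3$ (roughly: $x_1>x_2$ "deep"; $x_1=x_2>x_3$; $x_1=x_2=x_3$), and within each regime verify (ii) and (iii) by a residue computation modulo $k+1$. The $x_3\ge 2$ cases should run closely parallel to Xu--Zhu's normal-play analysis (Theorem~\ref{theorem2}), since when $x_3\ge 2$ the terminal position is still far away and the misère twist has not yet propagated; here I expect to be able to cite or lightly adapt the computations behind Theorem~\ref{theorem2}, with the role of $k$-good positions unchanged. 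The genuinely new work is the $x_3\le 1$ regime, where the endgame is visible: here I must show that "$k$-nice" is exactly the right replacement for "$R(x_1-x_2)=0$" in the even-$\beta$ subcase, and that the odd-$\beta$ subcase keeps the condition $R(x_1-x_2)=0$. The definition of $k$-nice (Definition~\ref{def4}) is visibly the "$x_3=0$" degeneration of $k$-good with $x_3$ formally replaced by $0$ and residues $R(x_i-0)=R(x_i)$ — this is the analogue of how misère Nim shifts the all-zero boundary by one — so the residue bookkeeping, while fiddly, should close.

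The main obstacle will be the boundary bookkeeping in the $x_3\le 1$ regime, specifically tracking how the parity of $\beta(S)$ flips as a move empties $x_1$ down past $x_2$ or past $1$, thereby moving a position between the "$x_3\le 1,\ \beta$ even" and "$x_3\le 1,\ \beta$ odd" cases (and, when $x_1$ shrinks so that $x_2$ becomes the new largest heap, reindexing everything). In particular, for step (iii) I must exhibit, for each non-$\mathcal{Q}$ position with $x_3\le 1$, a removal of $r\in\{1,\dots,k\}$ stones from $x_1$ landing in $\mathcal{Q}$; the constraint $r\le k$ is exactly why the residues are taken mod $k+1$, and I expect to need a short case split on $R(x_1)$, $R(x_2)$, and whether $x_1-x_2<k$ to produce the witnessing $r$. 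I would also handle separately the small-count edge cases where some $x_i$ are $0$ or $1$ and a single move reaches or nearly reaches $S_0$, checking these against Theorem~\ref{theorem1} for consistency when all heaps have size $\le 1$.
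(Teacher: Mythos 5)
Your outline correctly identifies the proof architecture the paper actually uses: induction on the total number of stones, with the set $\mathcal{Q}$ of condition-satisfying positions shown to equal $\mathbb{P}$ by verifying that every move from $\mathcal{Q}$ leaves $\mathcal{Q}$ and that every non-terminal position outside $\mathcal{Q}$ has a move into $\mathcal{Q}$; you also correctly predict that the $x_3\ge 2$ regime tracks Xu--Zhu and that the new content lives in $x_3\le 1$. But what you have written is a plan, not a proof, and the gap is precisely the part you defer with ``should close'' and ``I expect to need a short case split.'' The substance of the theorem is the explicit exhibition, for each non-$\mathcal{Q}$ position, of a witnessing removal $t$ (the paper gives $t=r_1+1$, $t=r_1$, or $t=R(r_1-1)$ with $r_1=R(x_1-x_2)$, depending on which clause of $k$-nice/$k$-good fails), together with the verification that $1\le t\le\min\{x_1,k\}$ and that the follower lands in $\mathcal{Q}$ --- including the cases where the move empties the largest heap, flips the parity of $\beta$, or drops $x_3'$ below $2$ so that the governing condition switches from $k$-good to $k$-nice. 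None of that is present, and it is not routine: the paper needs roughly thirty-five subcases plus a separate ``stable move'' lemma to dispatch the easy transitions.

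One concrete sign that the deferred bookkeeping is not as automatic as you suggest: your guiding heuristic that $k$-nice is ``the $x_3=0$ degeneration of $k$-good with residues $R(x_i-0)=R(x_i)$'' is off by one. Substituting $x_3=0$ into Definition~\ref{def3} gives, e.g., $R(x_1-x_2)=k$ and $R(x_2)=0$ for clause (1), which is not clause (1) of Definition~\ref{def4}. The correct correspondence is obtained by substituting $x_3=1$ (e.g., $R(x_1-x_2)=k$ and $R(x_2-1)=0$ yields $R(x_1)=0$, $R(x_2)=1$, which is $k$-nice (1)); this is exactly the misère shift you allude to, but misplacing it by one unit is the kind of boundary error that would derail the endgame cases $x_1-t\in\{0,1\}$ you have postponed. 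To complete the proof you must write out the move table and the residue verifications; the strategy is right, but the theorem is not yet proved.
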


% From Theorems \ref{theorem2} and \ref{theorem3}, we conclude that in $k$-Bounded Greedy Nim, 
% if $x_3\geq2$, the position is a standard position.

Throughout this paper, we fix a position $S=(x_1,x_2,\dots,x_n)$ and a follower position $S'=(x_1',x_2',\dots,x_n')$ after removing $t$ stones, where $1\leq t \leq \min\{x_1,k\}$.
It should be noted that in position $S'$, the condition $x_1'\geq x_2'\geq \dots \geq x_n'$ always holds.
For example, if the starting position is $S=(4,3,1)$ and two stones are removed, the position is $S'=(3,2,1)$.

The proof of Theorem \ref{theorem3} is by induction on the total number of stones.
First, in Section 3.2.1, the proof is carried out for the case where the total number of stones is less than or equal to 1.
 Next, in Section 3.2.2, a lemma useful in subsequent proofs is stated. 
 Then, in Section 3.2.3, the case where $x_3\leq1$ is considered, and in Section 3.2.4, the case where $x_3\geq2$ is considered, with the necessity and sufficiency of each case being proven.

\subsubsection{Case: Total Number of Stones Is Less than or Equal to 1}\label{sec:base}
For cases where the total number of stones is 1 or fewer, Theorem~\ref{theorem3} holds trivially.
If the total number of stones is 0, then the position $S$ is an N-position.
The position $S$ satisfies $x_3\leq 1$ and $\beta(S)$ is even.
Also, since $R(x_1)=R(x_2)=0$, $(x_1,x_2)$ is not $k$-nice. 

If the total number of stones is 1, then the position $S$ is a P-position.
The position $S$ satisfies $x_3\leq 1$ and $\beta(S)$ is even.
Also, since the position $S$ satisfies $R(x_1)=0$ and $R(x_2)=1$, $(x_1,x_2)$ satisfies (3) of the definition of $k$-nice. 

This completes the base cases for the induction.
In the remainder of the proof, we assume that the total number of stones is at least 2.

\subsubsection{Stable Moves}
This subsection introduces \emph{stable moves} and shows some properties of such moves.
The notion makes our arguments concise. 
% Supporting Definitions and Claims
% \begin{proof}
%   When $x_1-x_2\geq k$, removing any number of stones (operation A) does not change the fact that the largest heap remains on the left, so operation B (rearrangement) is unnecessary. 
%   Thus, the new position is $S'=(x_1-t,x_2,\dots,x_n)$.
%   So $\beta(S)=\beta(S')$. 
%   Since $R(x_1'-x_2')=R(x_1-t-x_2)\neq R(x_1-x_2)$ and $R(x_2'-x_3')=R(x_2-x_3)$, it follows that $S'$
%   does not satisfy the conditions of Theorem \ref{theorem3} if $S$ does.
% \end{proof}

\begin{definition}[stable move]
  \label{def5}
  We define removing $t$ stones as a \emph{stable move} if the condition $x_1-x_2\geq t$ holds. 
\end{definition}

\begin{fact}
  \label{claim1}
    % We consider removing $t$ stones from the position $S$.
    If removing $t$ stones is a stable move, then the following holds for the follower position $S'$:
    \begin{itemize}
     \item $x_1'=x_1-t$ and $x_j'=x_j$ for $2\leq j\leq n$,
     \item $\beta(S')=\beta(S)$.
    \end{itemize}
\end{fact}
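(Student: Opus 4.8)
The plan is to unwind Definitions~\ref{def2} and~\ref{def5} directly; Fact~\ref{claim1} is essentially a bookkeeping statement about how re-sorting interacts with a single removal. First I would observe that a stable move forces the first heap to be the unique maximum: from $t \ge 1$ and $x_1 - x_2 \ge t$ we get $x_1 > x_2 \ge x_3 \ge \dots \ge x_n$, so ``the heap with the largest number of stones'' is unambiguously the heap of size $x_1$, and the move replaces it by a heap of size $x_1 - t$ while leaving the sizes $x_2, \dots, x_n$ untouched.

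Next I would check that no genuine re-sorting is needed. After the removal the multiset of heap sizes is $\{x_1 - t,\, x_2,\, \dots,\, x_n\}$, and the stable condition gives $x_1 - t \ge x_2$, which together with $x_2 \ge x_3 \ge \dots \ge x_n$ shows that $(x_1 - t,\, x_2,\, \dots,\, x_n)$ is already weakly decreasing. Since $S'$ is by convention exactly this multiset listed in decreasing order, we conclude $x_1' = x_1 - t$ and $x_j' = x_j$ for all $2 \le j \le n$, which is the first bullet.

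The statement about $\beta$ then follows immediately from Definition~\ref{def2} once the first bullet is in hand: $\beta(S')$ depends only on the suffix $(x_3', x_4', \dots, x_n')$, and by the first bullet (using $3 \ge 2$) this suffix is literally $(x_3, x_4, \dots, x_n)$. Hence the test ``$x_3' = 0$'' has the same truth value as ``$x_3 = 0$'', and when $x_3 \ne 0$ we have $\max\{\, j : x_j' = x_3' \,\} - 2 = \max\{\, j : x_j = x_3 \,\} - 2$; therefore $\beta(S') = \beta(S)$.

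I do not expect a real obstacle here. The only point deserving a word of care is the possibility of a tie for the largest heap, which is ruled out precisely because a stable move has $t \ge 1$ and hence $x_1 > x_2$; everything else is a direct substitution into the relevant definitions.
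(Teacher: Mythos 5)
Your proof is correct and is exactly the routine verification the paper leaves implicit (the Fact is stated without proof): the stability condition $x_1 - t \ge x_2$ guarantees the sequence stays sorted without rearrangement, and $\beta$ depends only on the unchanged suffix $(x_3,\dots,x_n)$. No issues.
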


% \begin{proof}
%   If the condition $x_1-t\geq x_2$ holds, removing stones does not change the fact that the largest heap remains on the left.
%   Thus, the new position is $S'=(x_1-t,x_2,\dots,x_n)$ and $\beta(S)=\beta(S')$.
% \end{proof}

\begin{fact}
  \label{claim2}
  % We consider removing $t$ stones from the position $S$.
  If removing $t$ stones is not a stable move, then $x_1'=x_2$.
  %  for the follower position $S'=(x_1',x_2',\dots,x_n')$.
\end{fact}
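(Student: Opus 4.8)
The plan is to unwind the definition of a stable move and then use the paper's standing convention that every position is listed in non-increasing order. By Definition~\ref{def5}, removing $t$ stones fails to be a stable move precisely when $x_1-x_2<t$, which is equivalent to the inequality $x_1-t<x_2$. So the first step is simply to record this inequality as the hypothesis in usable form.

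Next I would describe the effect of the move at the level of heap multisets. Since the game is (Bounded) Greedy Nim, the $t$ stones must be taken from the largest heap, so the collection of heap sizes changes from $\{x_1,x_2,\dots,x_n\}$ to $\{x_1-t,\,x_2,\dots,x_n\}$; here $x_1-t\ge 0$ because $t\le\min\{x_1,k\}\le x_1$, so the result is a legal position. The crucial observation is that, by the paper's convention, $S'$ is this multiset rewritten in descending order, so $x_1'$ is by definition the maximum element of $\{x_1-t,\,x_2,\dots,x_n\}$. The final step is to identify that maximum: among $x_2,\dots,x_n$ the largest is $x_2$ because the original position was sorted, and combining this with $x_1-t<x_2$ from the first step yields $\max\{x_1-t,x_2,\dots,x_n\}=x_2$. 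Hence $x_1'=x_2$, as claimed.

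There is essentially no difficult step here; the single point requiring care is the re-sorting. Because the non-increasing convention forces every follower position to be relisted in sorted order, one cannot naively read off $x_1'=x_1-t$ (as one does for stable moves in Fact~\ref{claim1}): when the move is not stable, the decremented heap drops strictly below $x_2$ and is therefore no longer the largest, so the new top entry is exactly $x_2$. Making this re-sorting explicit is the entire content of the fact, and the proof amounts to the one-line chain of (in)equalities above.
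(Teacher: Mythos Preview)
Your argument is correct and is exactly the intended reasoning: the paper states this as a \emph{Fact} without proof, and your unpacking of the re-sorting convention (namely that $x_1-t<x_2$ forces $\max\{x_1-t,x_2,\dots,x_n\}=x_2$) is precisely the one-line justification the paper leaves implicit.
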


\begin{fact}
\label{claim4}
  If $r_1 =  R(x_1 - x_2) \neq 0$, removing $r_1$ stones is a stable move and $R(x_1' - x_2') = 0$ holds.
  % By removing $R(x_1 - x_2) \neq 0$ stones from a position $S$,
\end{fact}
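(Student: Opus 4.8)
The plan is to verify in turn that removing $r_1$ stones is a legal move, that it is stable, and then to compute $R(x_1'-x_2')$ directly using Fact~\ref{claim1}.

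First I would check legality. Since $r_1 = R(x_1-x_2)$ is a remainder modulo $k+1$, we have $0 \le r_1 \le k$, and the hypothesis $r_1 \neq 0$ sharpens this to $1 \le r_1 \le k$. To see that $r_1 \le x_1$, split on the size of $x_1 - x_2$: if $x_1 - x_2 \le k$, then $r_1 = x_1 - x_2 \le x_1$; otherwise $x_1 - x_2 \ge k+1$, which forces $x_1 \ge k+1 > k \ge r_1$. In either case $1 \le r_1 \le \min\{x_1,k\}$, so removing $r_1$ stones is a valid move in the sense fixed at the start of the section.

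Next, stability. Because $x_1 \ge x_2$, the quantity $x_1 - x_2$ is a non-negative integer, and the remainder of a non-negative integer never exceeds that integer, so $r_1 \le x_1 - x_2$. This is exactly the inequality $x_1 - x_2 \ge t$ of Definition~\ref{def5} with $t = r_1$, so the move is stable. Applying Fact~\ref{claim1} to this stable move gives $x_1' = x_1 - r_1$ and $x_j' = x_j$ for all $2 \le j \le n$; in particular $x_2' = x_2$. Hence $x_1' - x_2' = (x_1 - x_2) - r_1$, and since $r_1 \equiv x_1 - x_2 \pmod{k+1}$ this difference is a multiple of $k+1$, i.e.\ $R(x_1' - x_2') = 0$.

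I do not expect any genuine obstacle here: the statement follows almost immediately once the move is recognized as stable. The only point needing a moment's care is the legality check $r_1 \le x_1$, which uses the small case split above, and the observation that a non-negative integer dominates its own remainder, which is precisely what makes the move stable and lets Fact~\ref{claim1} apply.
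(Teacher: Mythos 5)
Your argument is correct and is exactly the routine verification the paper has in mind: the paper states this as a \emph{Fact} with no written proof, and the two observations you isolate (a non-negative integer dominates its remainder modulo $k+1$, giving stability, and then Fact~\ref{claim1} reduces $x_1'-x_2'$ to $(x_1-x_2)-r_1\equiv 0 \pmod{k+1}$) are precisely what makes it immediate. The legality check $r_1 \le \min\{x_1,k\}$ is a sensible extra detail and your case split for $r_1 \le x_1$ is fine.
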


% \begin{claim}
%   \label{claim3}
%     We consider removing $t$ stones from the position $S$.
%     If the condition $x_1-t\geq x_2$ holds and $S$ satisfies the conditions of Theorem \ref{theorem3} for being a P-position, 
%     then the follower position $S'$ does not satisfy them.
% \end{claim}
  
%   \begin{proof}
%   If the condition $x_1-t\geq x_2$ holds, removing stones does not change the fact that the largest heap remains on the left.
%   Thus, the new position is $S'=(x_1-t,x_2,\dots,x_n)$.
%   So $\beta(S)=\beta(S')$. 
%   Since $R(x_1'-x_2')=R(x_1-t-x_2)\neq R(x_1-x_2)$ and $R(x_2'-x_3')=R(x_2-x_3)$, it follows that $S'$
%   does not satisfy the conditions of Theorem \ref{theorem3} if $S$ does.
% \end{proof}

%stable moveならば十分性を考えなくて良いという主張
\begin{lemma}
  \label{claim3}
  If removing $t$ stones is a stable move and $S$ satisfies the conditions of Theorem~\ref{theorem3} for being a P-position, 
  the follower position $S'$ does not satisfy them.
\end{lemma}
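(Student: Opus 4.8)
The plan is to reduce this lemma to a short arithmetic check by exploiting Fact~\ref{claim1}. Since removing $t$ stones is a stable move, that fact tells us that $x_1' = x_1 - t$ with $1 \le t \le k$, that $x_j' = x_j$ for every $j \ge 2$ (so in particular $x_2' = x_2$ and $x_3' = x_3$), and that $\beta(S') = \beta(S)$. Hence the two quantities that select which of the four bullets of Theorem~\ref{theorem3} can apply --- whether $x_3 \le 1$ or $x_3 \ge 2$, and the parity of $\beta$ --- take the same values for $S'$ as for $S$. Therefore, if $S$ is classified as a P-position by one of the bullets, then that same bullet is the \emph{only} one whose structural hypotheses $S'$ can meet, and it suffices to show that the arithmetic condition in that bullet fails once $x_1$ is replaced by $x_1 - t$. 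The single elementary fact used repeatedly is that $1 \le t \le k$ forces $t \not\equiv 0 \pmod{k+1}$, so $R(a - t) \ne R(a)$ for every integer $a$; concretely $R(-t) = k + 1 - t \ne 0$.

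I would then argue bullet by bullet. For the two bullets whose condition is $R(x_1 - x_2) = 0$ (namely $x_3 \ge 2$ with $\beta(S)$ even, and $x_3 \le 1$ with $\beta(S)$ odd), from $R(x_1 - x_2) = 0$ we get $R(x_1' - x_2') = R(x_1 - x_2 - t) = R(-t) = k + 1 - t \ne 0$, so $S'$ fails the condition. For the bullet ``$x_3 \ge 2$, $\beta(S)$ odd, $(x_1,x_2,x_3)$ is $k$-good'', inspecting Definition~\ref{def3} shows that for each fixed value of $R(x_2 - x_3)$ there is exactly one residue class of $R(x_1 - x_2)$ that makes the triple $k$-good (namely $k$, $0$, or $1$ according as $R(x_2-x_3)$ equals $0$, lies in $\{1,\dots,k-1\}$, or equals $k$); since $R(x_2' - x_3') = R(x_2 - x_3)$ is unchanged while $R(x_1' - x_2') \ne R(x_1 - x_2)$, the triple $(x_1',x_2',x_3')$ is not $k$-good.

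The remaining bullet, ``$x_3 \le 1$, $\beta(S)$ even, $(x_1,x_2)$ is $k$-nice'', is slightly more delicate, and here I would split on the value of $R(x_2)$, which the move leaves fixed. If $R(x_2) = 1$, then Definition~\ref{def4} forces clause (1), i.e.\ $R(x_1) = 0$; then $R(x_1') = R(x_1 - t) \ne 0$, and since $R(x_2') = 1$ rules out clauses (2) and (3), $(x_1',x_2')$ is not $k$-nice. If $R(x_2) = 0$, then clause (3) holds, i.e.\ $R(x_1) = 1$; but for $1 \le t \le k$ one checks $R(x_1 - t) \in \{0, 2, 3, \dots, k\}$, never $1$, so clause (3) fails for $(x_1',x_2')$, and $R(x_2') = 0$ rules out clauses (1) and (2). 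If $R(x_2) \ge 2$, then clause (2) holds, i.e.\ $R(x_1 - x_2) = 0$; as before $R(x_1' - x_2') = R(x_1 - x_2 - t) \ne 0$, so clause (2) fails, and $R(x_2') \ge 2$ rules out clauses (1) and (3).

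I do not expect a real obstacle: the argument is a finite case analysis. The only points needing care are (a) noting that stability ($x_1 - x_2 \ge t$) keeps $x_1' = x_1 - t \ge x_2' \ge 1$, so $S'$ is a properly ordered position on which the $k$-nice and $k$-good predicates are legitimately evaluated, and (b) in the $k$-nice case, ensuring that decreasing $x_1$ cannot carry $(x_1,x_2)$ out of one clause and into another --- which is exactly why the case split is organized by the invariant $R(x_2)$.
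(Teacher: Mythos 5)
Your proof is correct and follows essentially the same route as the paper's: invoke Fact~\ref{claim1}, observe that $\beta(S)$, $x_3$, $R(x_2)$ and $R(x_2-x_3)$ are preserved while $R(x_1)$ and $R(x_1-x_2)$ must change because $1\le t\le k$ is not a multiple of $k+1$, and conclude that the relevant clause of Theorem~\ref{theorem3} fails for $S'$. The only difference is that you spell out the bullet-by-bullet and clause-by-clause verification that the paper's three-sentence proof leaves implicit.
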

\begin{proof}
  If removing $t$ stones is a stable move, the new position is $S'=(x_1-t,x_2,\dots,x_n)$ and $\beta(S)=\beta(S')$ by Fact~\ref{claim1}. 
  Since $R(x_2')=R(x_2)$ holds but $R(x_1-t)\neq R(x_1)$, and $R(x_2'-x_3')=R(x_2-x_3)$ holds but $R(x_1'-x_2')=R(x_1-t-x_2)\neq R(x_1-x_2)$,
  it follows that $S'$ does not satisfy the conditions of Theorem~\ref{theorem3} if $S$ does.
\end{proof}

Hereafter, we write $r_1=R(x_1-x_2)$ for simplicity.

We split the proof into two cases, based on the value of $x_3$.

\subsubsection{Proof: Case $x_3\leq1$}
In the case when $x_3\leq1$, the state transition diagram is shown in Figure~\ref{fig:fig1}. 
The blue states represent P-positions, while the red states represent N-positions. 
% It is assumed that the player with a winning strategy follows the optimal moves correctly. 
While we show all possible transitions from each P-position, we draw only optimal transitions from each N-position.
 The proof proceeds in the order indicated by the numbered arrows in the diagram.
The conditions for case distinction are followed by their corresponding arrow numbers.
We first show the necessity.
\begin{figure}[H]
  \includegraphics[width=1.0\columnwidth, clip]{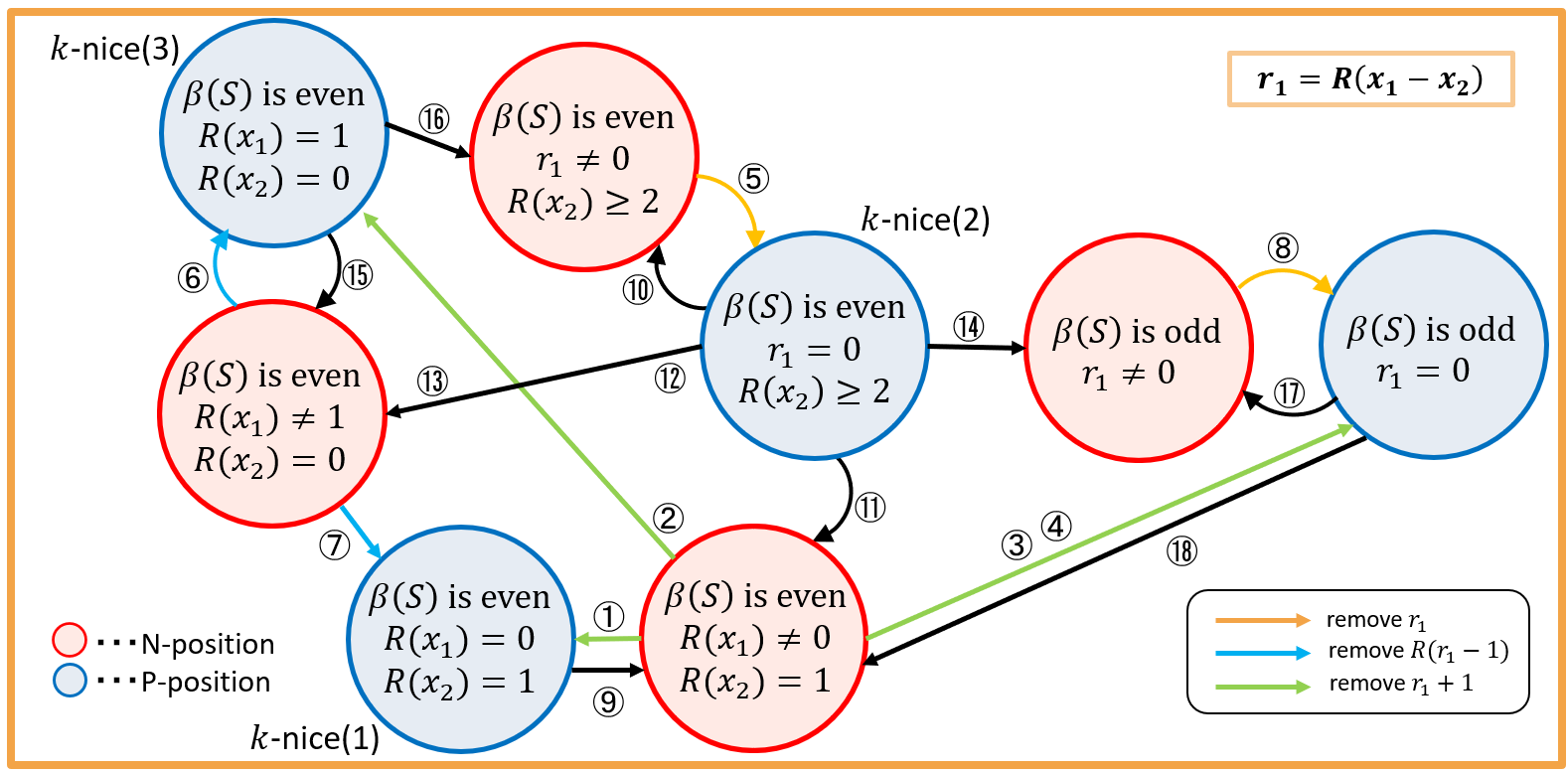}
  \caption{The state transition diagram when $x_3\leq1$}
  \label{fig:fig1}
  \end{figure}

\paragraph{Necessity.}
(The position $S$ is a P-position $\implies$ The position $S$ satisfies one of the conditions of Theorem~\ref{theorem3})

\vspace{4pt} \noindent
\textbf{Case 1: $\beta(S)$ is even.}

\noindent
We first consider the case where $\beta(S)$ is even. 
We will prove that $(x_1,x_2)$ is $k$-nice if the position $S$ is a P-position, using contraposition.
We will show that removing $t$ stones from a position $S$ that is not $k$-nice is a winning strategy where $t$ is determined below.
Here, we verify that $1\leq t\leq \min \{x_1,k\}$.
% We will show how to determine the number $t$ and prove that $1\leq t\leq \min \{x_1,k\}$ holds. 
\begin{enumerate}
  \item If $R(x_1) \neq 0$ and $R(x_2) = 1$, then $t = r_1 + 1$.  

        Since $0 \leq r_1 \leq k$ and the assumption ensures $r_1 \neq k$, it follows that $1 \leq r_1 + 1 \leq k$.  
        Additionally, $x_1 \geq r_1 + x_2 \geq r_1 + 1$ holds.  
        Therefore, $1\leq t\leq \min \{x_1,k\}$ holds.
  \item If $r_1 \neq 0$ and $R(x_2) \geq 2$, then $t = r_1$.  

        Since $0 \leq r_1 \leq k$ and the assumption ensures $r_1 \neq 0$, it follows that $1 \leq r_1 \leq k$.  
        Additionally, $x_1 \geq r_1 + x_2 \geq r_1 + 2$ holds.
        Therefore, $1\leq t\leq \min \{x_1,k\}$ holds.
 \item If $R(x_1) \neq 1$ and $R(x_2) = 0$, then $t = R(r_1 - 1)$.  
        
        Since the assumption ensures $r_1 \neq 1$, it follows that $1 \leq R(r_1 - 1) \leq k$.  
        Additionally,
        if $x_2=0$, $R(r_1-1)=R(x_1-1) \leq x_1-1$ by $x_1 \neq 0$.
        On the other hand, if $x_2\neq 0$, then $x_1 \geq r_1 + x_2 \geq r_1+k+1 \geq R(r_1-1)$ holds.
        Therefore, $1\leq t\leq \min \{x_1,k\}$ holds.
\end{enumerate}
   
   Next, we prove that the position $S'$ is a P-position.
   
   \begin{enumerate}
       \item When $R(x_1) \neq 0$ and $R(x_2) = 1$:  \\
             In this case, since $R(x_1) \geq R(x_2)$, we have
             $R(x_1 - x_2) = R(x_1) - R(x_2) = R(x_1) - 1.$
             Also, $R(x_1 - t) = R(x_1 - (r_1 + 1)) = R(x_1 - (R(x_1) - 1 + 1)) = R(x_1 - R(x_1)) = 0.$
             We now consider four cases.
             \begin{itemize}
                 \item When $x_1 - t \geq x_2$:  \cnum{1}\\
                       Since the move is stable,
                       $x'_1 = x_1 - t$, $x'_j = x_j$ for all $j \geq 2$, and $\beta(S') = \beta(S)$ is even.
                       Since $R(x'_1) = R(x_1 - t) = 0$ and $R(x'_2) = R(x_2) = 1$,
                       $(x'_1, x'_2)$ satisfies  (1) of the definition of $k$-nice.  
                       By the induction hypothesis, we conclude that $S' \in \mathbb{P}$.
                 \item When $x_2 > x_1 - t \geq 1$:  \cnum{2}\\
                       Since the move is not stable, $x'_1 = x_2$.
                       Also since $x_3 \leq 1$, we have $x'_2 = x_1 - t$ and $x_j'=x_j$ for all $j \geq 3$. 
                       So $\beta(S') = \beta(S)$ remains even.  
                       Additionally, $R(x'_1) = R(x_2) = 1$ and $R(x'_2) = R(x_1 - t) = 0$,  
                       so $(x'_1, x'_2)$ satisfies (3) of the definition of $k$-nice.  
                       By the induction hypothesis, $S' \in\mathbb{ P}$.
                \item When $x_1 - t = x_3=0$: \cnum{3}\\
                      Since $x_3 = 0$, the starting position $S$ has just two non-zero heaps.
                      All the stones are removed from the leftmost heap of $S$, 
                       resulting in just one non-zero heap with $x_1' = x_2$ stones in $S'$.
                      %  For position $S'$, $x'_1 = x_2$ and $x'_2 = 0$, and $\beta(S') = 0$. 
                       Since $R(x'_1) = R(x_2) = 1$ and $R(x'_2) = 0$, $(x'_1, x'_2)$ satisfies (3) of the definition of $k$-nice. 
                       By the inductive hypothesis, $S' \in \mathbb{P}$.
                \item When $x_1 - t =0$ and $ x_3=1$: \cnum{4}\\
                      All the stones are removed from the leftmost heap,
                      so $x'_j = x_{j+1}$ for all $j \geq 1$.
                      Moreover, since $\beta(S)$ is even and non-zero, $\beta(S') = \beta(S) - 1$ is odd.
                      Here, given $R(x_2) = 1$ and $x_3 = 1$, $R(x_2 - x_3) = 0$. 
                      Therefore, $R(x'_1 - x'_2) = R(x_2 - x_3) = 0$.
                      By the induction hypothesis, $S' \in \mathbb{P}$.
             \end{itemize}
       \item When $R(x_1-x_2)\neq 0$ and $R(x_2)\geq 2$: \cnum{5}\\
       By Fact~\ref{claim4}, after removing $r_1=R(x_1-x_2)$ stones, the position $S'$ satisfies  $ R(x'_1 - x'_2) = 0$ and
       $\beta(S') = \beta(S)$ is even.
       Moreover, since $R(x'_1 - x'_2) = 0$ and $R(x'_2) = R(x_2) \geq 2$, $(x'_1, x'_2)$ satisfies (2) of the definition of $k$-nice.
       Thus, by the induction hypothesis, $S' \in \mathbb{P}$.

       \item When $R(x_1) \neq 1$ and $R(x_2) = 0$:\\
       We obtain that $R(x_1 - t) = R(x_1 - R(r_1 - 1)) = R(x_1 - (x_1 - x_2 - 1)) = R(x_2 + 1) = 1$.
       \begin{itemize}
        \item When $x_1>x_2$: \cnum{6}\\
        If $r_1 = 0$, then $x_1 \geq x_2 + k+1$ and thus the move is stable.
        If $r_1 \neq 0$, then $x_1 - R(r_1 - 1) = x_1 - (r_1 - 1) \ge x_2 + 1$ and thus the move is stable.
        % Since $x_1 - x_2 \geq r_1 > r_1 - 1 \geq R(r_1 - 1)$, it follows that $x_1 - R(r_1 - 1) > x_2$.
        % So, the leftmost heap still has the largest number of stones after  removing the stones in operation A, operation B does not change the order.  
        % Thus $x_1'=x_1 - t$. 
        % $\beta(S') = \beta(S)$ is even. 
        % So, removing $t$ stones is a stable move.
        Hence, $x_1'=x_1 - t$, $x'_j = x_j$ for all $j \geq 2$, and $\beta(S') = \beta(S)$ is even.
        Since $R(x'_1) = R(x_1 - t) = 1$ and $R(x'_2) = R(x_2) = 0$ hold,
        $(x'_1, x'_2)$ satisfies  (3) of the definition of $k$-nice.
        By the induction hypothesis, we conclude that $S' \in \mathbb{P}$.
       \item When $x_1 = x_2$: \cnum{7}\\
       Removing $t$ stones is not a stable move, so $x'_1 = x_2$.
       Moreover, since $R(x_1) = R(x_2) = 0$, it follows that $x_1$ is a multiple of $k + 1$ and $x_1 - k \geq 1$.
      %  Thus, the number of stones in the leftmost heap remains at least 1.  
       Since $x_3 \leq 1$, we have $x'_2 = x_1 - t$.  
       Additionally, $ \beta(S') = \beta(S) $ is even.
       Since  $ R(x'_1) = R(x_2) = 0 $  and  $ R(x'_2) = R(x_1 - t) = 1 $,  
       $(x'_1, x'_2)$ satisfies  (1) of the definition of $k$-nice.
        By the induction hypothesis, we conclude that $S' \in \mathbb{P}$.
       \end{itemize}
   \end{enumerate}

\vspace{4pt} \noindent
\textbf{Case 2: $\beta(S)$ is odd.} \cnum{8}

\noindent
We now consider the case where $\beta(S)$ is odd. 
We prove by contraposition that if the position $S$ is a P-position, then $r_1 = R(x_1 - x_2) = 0$.
By Fact~\ref{claim4}, if $r_1 \neq 0$, then after removing $r_1$ stones, the position $S'$ satisfies $R(x'_1 - x'_2) = 0$ and $\beta(S') = \beta(S)$ is odd.
By the induction hypothesis, $S' \in \mathbb{P}$.  

Now, we proceed to the proof of the sufficiency.
\paragraph{Sufficiency.}
(The position  $S$ satisfies one of the conditions of Theorem \ref{theorem3} $\implies$ The position $S$ is a P-position)

If $S$ satisfies one of the conditions of Theorem~\ref{theorem3} and the move is stable, by Lemma~\ref{claim3} and the induction hypothesis, $S'$ is an N-position and thus $S$ is a P-position.
Therefore, it suffices to discuss the case where the move is not stable, i.e., $t > x_1-x_2$.

\vspace{4pt} \noindent
\textbf{Case 1: $\beta(S)$ is even.}  

\noindent
We first consider the case where $\beta(S)$ is even.
We will prove that if $(x_1, x_2)$ is $k$-nice, the position $S$ is a P-position. 
We will show that $S'\notin \mathbb{P}$ if $(x_1, x_2)$ satisfies (1), (2) or (3) of the definition of $k$-nice, in that order.
% If $x_1-x_2\geq k$, removing any number of stones is a stable move, so $S'\notin \mathbb{P}$ by Claim~\ref{claim3} and the induction hypothesis.
% We only need to consider the case where $x_1 - x_2 < k$.
\begin{enumerate}
\item When $R(x_1) = 0$ and $R(x_2) = 1$: \cnum{9}\\
  %  Since $\beta(S') = \beta(S)$ remains even, and $R(x'_1) = R(x_1 - t) \neq 0$, $(x'_1, x'_2)$ is not $k$-nice.  
  %  By the induction hypothesis, $S' \notin \mathbb{P}$.\\
  In this case, $x_1-x_2 \geq k$ always holds, so the move is stable. 
  %  $S'\notin \mathbb{P}$.
\item When $R(x_1-x_2)=0$, i.e., $x_1 = x_2$, and $R(x_2) \geq 2$:  
\begin{itemize}
\item When $x_1 - t \geq 1$:\\
  Removing $t$ stones is not a stable move, so $x'_1 = x_2$.
  Since $x_3 \leq 1$, $x'_2 = x_1 - t$ holds.
  Moreover, since $x_j' = x_j$ for all $j \geq 3$, $\beta(S') = \beta(S)$ remains even.
  We consider three subcases based on the value of $R(x'_2)$.
  \begin{itemize}[leftmargin=8pt]
   \item When $R(x'_2) \geq 2$: \cnum{10}\\
     Since $1\leq t\leq k $, $R(x'_1 - x'_2) = R(x_2 - (x_1 - t)) = R(t) \neq 0$, $(x'_1, x'_2)$ is not $k$-nice.  
     By the induction hypothesis, $S' \notin \mathbb{ P}$.
   \item When $R(x'_2) = 1$:  \cnum{11}\\
     Since $R(x_2) \geq 2$, we have $R(x'_1) = R(x_2) \neq 0$.  
     Hence, $(x'_1, x'_2)$ is not $k$-nice, and by the induction hypothesis, $S' \notin \mathbb{P}$.
   \item When $R(x'_2) = 0$: \cnum{12}\\
     Since $R(x_2) \geq 2$, we have $R(x'_1) = R(x_2) \neq 1$.  
     Thus, $(x'_1, x'_2)$ is not $k$-nice, and by the induction hypothesis, $S' \notin \mathbb{P}$.
  \end{itemize}
\item When $x_1 - t = 0$:
\begin{itemize}[leftmargin=8pt]
   \item When $x_3 = 0$: \cnum{13}\\ 
    %  Since operation A removes all the stones from the leftmost heap,
     In this case, we have $x'_1 = x_2$ and $x'_j = 0$ for all $j \geq 2$.  
     Since $R(x'_2) = 0$ and $R(x'_1) = R(x_2) \geq 2$, $(x'_1, x'_2)$ is not $k$-nice.  
     By the induction hypothesis, $S' \notin \mathbb{P}$.
  \item When $x_3 = 1$: \cnum{14}\\
    In this case, $x_j'=x_{j+1}$ for $1 \leq j \leq n-1$, and $\beta(S') = \beta(S) - 1$ is odd.
    Moreover, since $R(x_1)=R(x_2)\geq 2$, we have $R(x'_1 - x'_2) = R(x_2 - x_3)\neq 0$.
    By the induction hypothesis, $S' \notin \mathbb{P}$.
\end{itemize}
\end{itemize}
\item When $R(x_1) =1$ and $R(x_2) = 0$:\\
In this case, $x_1 - x_2 = 1$ holds by the assumption $x_1 - x_2 < t \leq k$.
% If $x_1 = 1$, then $x_j = 0$ for all $j \geq 2$, which is the base case discussed in Section~\ref{sec:base}.
There are two possibilities.
\begin{itemize}
\item When $t=1$: \cnum{15}\\
This is a stable move.
% In this case, $x_1'=x_2'=x_1-1=x_2$ holds.
% Since $R(x'_1) = R(x_2') = R(x_2) = 0$, $(x'_1, x'_2)$ is not $k$-nice.
% By the induction hypothesis, $S' \notin \mathbb{P}$.
\item When $2 \leq t \leq \min\{x_1,k\}$: \cnum{16}\\
By $x_1 \geq t \geq 2$ and $R(x_1)=1$, we have $x_1 \geq k + 2$.
% Since $x_1 - t < x_2$, removing $t$ stones is not a stable move.
Then, $x_1 - t \geq x_1 - k \geq 2 > x_3 $.
Thus, $x_1'=x_2$, $x_2'=x_1-t$, and $x_j'=x_j$ for all $j \geq 3$ , which implies that $\beta(S')=\beta(S)$ is even.
By $R(x_1)=1$ and $k \geq t \geq 2$, we have $R(x_2')=R(x_1-t)\geq 2$.
But $R(x'_1 - x'_2) = R(x_2 - (x_1 - t)) = R(-1+t)\neq 0$.
Thus, $(x'_1, x'_2)$ is not $k$-nice, by the induction hypothesis, $S' \notin \mathbb{P}$.
\end{itemize}
\end{enumerate}

\vspace{4pt} \noindent
\textbf{Case 2: $\beta(S)$ is odd}.

\noindent
We now consider the case where $\beta(S)$ is odd.
We will prove that if $R(x_1-x_2)=0$, then the position $S$ is a P-position. 
In the case where $x_1-x_2 \geq k+1$, removing any number of stones is a stable move, so $S'\notin \mathbb{P}$ by Lemma~\ref{claim3} and the induction hypothesis.

Now suppose $x_1 = x_2$.
Since $\beta(S)$ is odd and this subsection assumes $x_3 \leq 1$, we have $x_3=1$.
From this point on, the proof is based on the value of $x_1 - t$.
\begin{itemize}
  \item When $x_1-t \geq 1$: \cnum{17}\\
  We have $x_1'=x_2$, $x_2'=x_1-t$, and $x'_j = x_j$ for all $j \geq 3$.  
  Therefore, $\beta(S') = \beta(S)$ is odd.  
  Since $R(x'_1 - x'_2) = R(x_2 - (x_1 - t)) = R(x_1 - x_2 + t) = t \neq 0$,  
  by the induction hypothesis, $S' \notin \mathbb{P}$.
  \item When $x_1 - t = 0$: \cnum{18}\\
  Since all stones in the leftmost heap are removed, we have $x'_j = x_{j+1}$ for $1 \leq j \leq n-1$.  
  Thus, $\beta(S') = \beta(S) - 1$ is even, and $R(x'_2) = R(x_3) = R(1) = 1$.  
  However, since $x_2 = x_1 = t \leq k$, it follows that $R(x'_1) = R(x_2) \neq 0$.  
  Therefore, $(x'_1, x'_2)$ is not $k$-nice.
  Hence, by the induction hypothesis, $S' \notin \mathbb{P}$.
\end{itemize}

Therefore, when $x_3\leq1$, all positions satisfying the conditions of Theorem~\ref{theorem3} are P-positions.

\subsubsection{Case: when $x_3\geq2$}
Hereafter, we prove Theorem~\ref{theorem3} for $x_3\geq2$. 
The state transition diagram when $x_3 \geq2$ is shown in Figure \ref{fig:fig2}.
The proof is very similar to the one in normal play made by Xu and Zhu~\cite{xu2018bounded}, 
but it should be noted that the necessary and sufficient conditions for the follower position $S'$ to be a P-position changes depending on whether $x'_3\geq 2$ or not.\\
\begin{figure}[H]
    \includegraphics[width=1.0\columnwidth, clip]{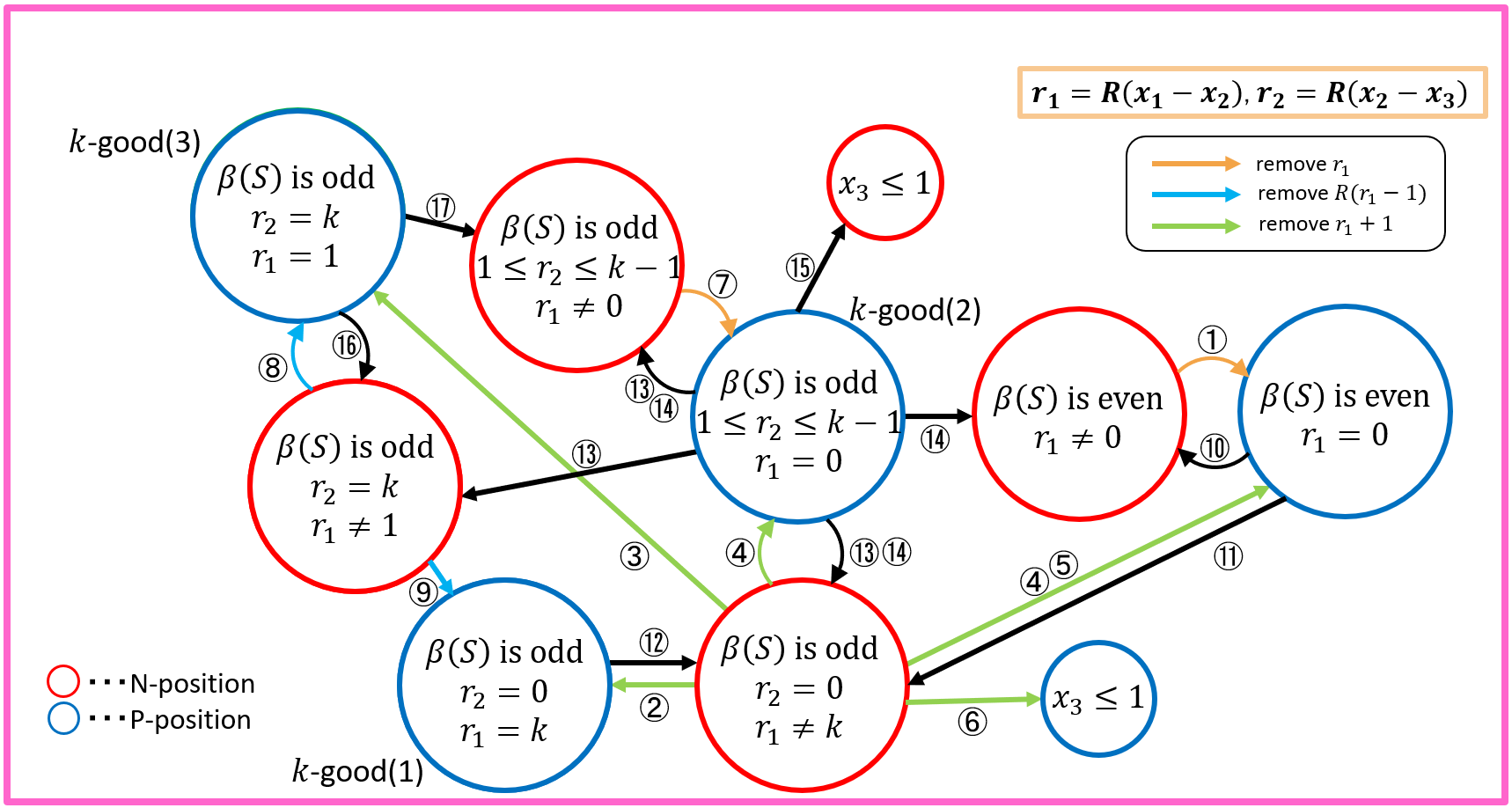}
    \caption{The state transition diagram when $x_3\geq2$}
    \label{fig:fig2}
\end{figure}

\vspace{4pt}\noindent
\textbf{Necessity.}
(The position $S$ is a P-position $\implies$ The position  $S$ satisfies one of the conditions of Theorem~\ref{theorem3})

\vspace{4pt}\noindent
\textbf{Case 1: $\beta(S)$ is even.} \cnum{1}

\noindent
We prove that if $S$ is a P-position, then $r_1=R(x_1 - x_2) = 0$, by contraposition.
From Fact~\ref{claim4}, if $r_1\neq 0$, then by removing $r_1$ stones from the position $S$, we obtain a position $S'$ such that $\beta(S') = \beta(S)$ is even, and $R(x_1' - x_2') = 0$ holds.
Therefore, by the induction hypothesis, we have $S' \in \mathbb{P}$ and thus $S$ is an N-position.

\vspace{4pt}\noindent
\textbf{Case 2: $\beta(S)$ is odd.}

\noindent
By contraposition, we will prove that $(x_1,x_2,x_3)$ is $k$-good if the position $S$ is a P-position. 
We will show that removing $t$ stones from a position $S$ that is not $k$-good is a winning strategy where $t$ is determined below.
% We will show that how to determine the number $t$ and prove that $1\leq t\leq \min \{x_1,k\}$ holds. 
\begin{enumerate}
    \item If $R(x_1-x_2)\neq k$ and $R(x_2-x_3)=0$, then $t=r_1+1$.
    
    Since $0 \leq r_1 \leq k$ and the assumption ensures $r_1 \neq k$, it follows that $1 \leq r_1 + 1 \leq k$.  
    Additionally, $x_1 \geq r_1 + x_2 \geq r_1 + x_3 \geq r_1 + 2$ holds.  
    Therefore, $1\leq t\leq \min \{x_1,k\}$ holds.
    \item If $R(x_1-x_2)\neq 0$ and $1\leq R(x_2-x_3)\leq k-1$, then $t=r_1$.
    
    Since $0 \leq r_1 \leq k$ and the assumption ensures $r_1 \neq 0$, it follows that $1 \leq r_1 \leq k$.
    Additionally, $x_1 \geq r_1 + x_2 > r_1 + x_3 \geq r_1 + 2$ holds.
    Therefore, $1\leq t\leq \min \{x_1,k\}$ holds.
    \item If $R(x_1-x_2)\neq 1$ and $R(x_2-x_3)=k$, then $t=R(r_1-1)$.
    
    Since the assumption ensures $r_1 \neq 1$, it follows that $1 \leq R(r_1 - 1) \leq k$.  
    Additionally, since $x_1 \geq x_2 \geq x_3+k \geq 2+k >R(r_1 - 1)$ holds,  
    Therefore, $1\leq t\leq \min \{x_1,k\}$ holds.
\end{enumerate}

Next, we prove that the position $S'$ after each of these moves is a P-position.
\begin{enumerate}
    \item When $R(x_1-x_2)\neq k$ and $R(x_2-x_3)=0$: 
    \begin{itemize}
        \item When $x_1-x_2\geq k$: \cnum{2}\\
        Since removing $t$ stones is a stable move, $x'_1 = x_1 - t=x_1-r_1-1$, $x'_j = x_j$ for all $j \geq 2$ and $\beta(S') = \beta(S)$ remains odd.  
        In addition, $x_3'=x_3\geq2$.
        Since $R(x'_1 - x'_2) = R(x_1 - x_2 - r_1 - 1) = R(x_1 - x_2 - R(x_1 - x_2) - 1) = k$ and $R(x'_2 - x'_3) = R(x_2 - x_3) = 0$, 
        $(x'_1,x'_2,x'_3)$ satisfies (1) of the definition of $k$-good.
        By the induction hypothesis, we conclude that $S' \in \mathbb{P}$.
        \item When $x_1-x_2<k$ and $x_2>x_3$: \cnum{3}\\
        Since $r_1=R(x_1-x_2)=x_1-x_2$, we have 
         $ x_1 - t = x_1 - (r_1 + 1) = x_1 - (x_1 - x_2 + 1) = x_2 - 1 \ge x_3$.
        Thus, $x_1' = x_2$, $x_2' = x_1-t = x_2-1$, and $x'_j = x_j$ for all $j \geq 3$.
        % Also, since $R(x_2-x_3)=0$ and $x_2 > x_3$, $x_2-x_3\geq k+1$ holds.
        % So, $ x'_2 = x_1 - t = x_1 - (r_1 + 1) = x_1 - (x_1 - x_2 + 1) = x_2 - 1 $. 
        Here, $\beta(S') = \beta(S)$ remains odd.
        As $R(x'_1 - x'_2) = R(x_2 - x_2+1) = 1$ and $R(x'_2 - x'_3) = R(x_2 - x_3-1)=k$,
        $(x'_1,x'_2,x'_3)$ satisfies (3) of the definition of $k$-good.
        By the induction hypothesis, we conclude that $S' \in \mathbb{P}$.
        \item When $ x_1 - x_2 < k$, $ x_2 = x_3 $ and $(x_3>2$ or $x_4 >1)$. \\
        Since $r_1=x_1-x_2$, we have $x_1 - t = x_1 - (r_1 + 1) = x_2 - 1 = x_3 - 1$.
        So, $x_1'=x_2'=x_2=x_3$. 
        On the other hand, $x_3' = \max\{x_4,x_1-t\} = \max\{x_4,x_3-1\} \ge 2$ by the assumption. 
        % after removing the stones, the number of stones in the leftmost heap is less than the number of stones in the third heap.
        % When $ x_2 \geq 3$, $ x_1 - (r_1 + 1) = x_1 - (x_1 - x_2 + 1) = x_2 - 1 \geq 2 $, so $ x'_3 \geq 2 $.  
        % Also, when $ x_4 \geq 2 $, we have $ x'_3 \geq 2 $ as well.
        \begin{itemize}[leftmargin=8pt]
        \item When $\beta(S)=1$. \cnum{4}\\
        In this case, $x_2=x_3 > x_4$ and thus $x_3' = \max\{x_4,x_3-1\} = x_3-1$.
        % Then $x_3\neq x_4$ and $x_3'=x_1-t=x_3-1$ holds.
        % After , the number of stones in the leftmost heap is $x_3-1$, 
        Since $R(x'_2-x'_3)=R(x_3-x_3+1)=1$ and $R(x'_1-x'_2)=0$,
        $(x'_1,x'_2,x'_3)$ satisfies (2) of the definition of $k$-good.
        Whether $\beta(S')$ is even or odd, the position $S'$ satisfies the condition of Theorem~\ref{theorem3} to be a P-position.
        By the induction hypothesis, we conclude that $S' \in \mathbb{P}$.
        \item When $\beta(S)\geq 3$. \cnum{5}\\
        In this case, $x_3=x_4$ implies $x_3'=x_4=x_3 > x_1 - t = x_3-1$.
        Therefore, $\beta(S')=\beta(S)-1$ is even.
        % After operation A, the number of stones in the leftmost heap is $x_3-1$, 
        % Then $\beta(S')=\beta(S)-1$ is even.
        Since $R(x'_1-x'_2)=0$, by the induction hypothesis, we conclude that $S' \in \mathbb{P}$.
        \end{itemize}
        \item When $ x_1 - x_2 < k$, $ x_2 = x_3 = 2 $, and $ x_4 \leq 1 $: \cnum{6}\\
        Since $ r_1 = x_1 - x_2 $, 
        % after removing $ t = r_1 + 1 $ stones in operation A,    
        we have $ x_1 - t = x_2 - 1 = 1 $.  
        % Therefore, after operation A, the number of stones in the leftmost heap is less than that in the third heap,  
        Therefore $ x'_1 = x'_2 = 2 $ and $ x'_3 \leq 1 $ hold.  
        Here,  $ R(x'_1 - x'_2) = R(2 - 2) = 0 $.  
        Also, since $ k \neq 1 $, we have $ R(x'_2) = R(2) = 2 $,  
        so $ (x'_1, x'_2) $ satisfies (2) of the definition of $k$-nice.  
        Regardless of whether $ \beta(S') $ is odd or even, the position $ S' $ satisfies the condition of Theorem~\ref{theorem3} for being a P-position.  
        Thus, by the induction hypothesis, we conclude that $ S' \in \mathbb{P} $.
    \end{itemize}
    \item When $R(x_1-x_2)\neq 0$ and $1\leq R(x_2-x_3)\leq k-1$: \cnum{7}\\
    % Let $S'$ be the position obtained by removing $r_1$ stones from a position where $r_1=R(x_1-x_2)\neq 0$.
    Since $t = r_1$, by Fact~\ref{claim4}, $\beta(S') = \beta(S)$ is odd, and $x_3' = x_3 \geq 2$ holds.  
    Also, since $R(x_1' - x_2') = 0$ and $R(x_2' - x_3') = R(x_2 - x_3)$ hold, $(x_1', x_2', x_3')$ satisfies (2) of the definition of $k$-good.  
    Therefore, by the induction hypothesis, $S' \in \mathbb{P}$.
    \item When $R(x_1-x_2)\neq 1$ and $R(x_2-x_3)=k$: \\
    Since $ x_2 - x_3 \geq k $, 
    % after removing stones in operation A, the number of stones in the leftmost heap remains at least as large as that in the third heap.  
    $x_1-t \geq x_3$.
    Thus, we have $ x'_j = x_j $ for all $j \geq 3$.  
    Therefore, $ \beta(S') = \beta(S) $ is odd, and $ x'_3 = x_3 \geq 2 $ holds.  
% 
    % From here, we proceed with the proof by considering two cases.
    \begin{itemize}
        \item When $ x_1 - x_2 \geq 2$: \cnum{8}\\
        If $r_1 = 0$, $x_1 - x_2 \geq k+1 > t$.
        If $r_1 > 0$, $ x_1 - x_2 \geq r_1 > R(r_1 - 1) = t $.
        % Since $ x_1 - x_2 \geq r_1 > R(r_1 - 1) = t $, removing $ t $ stones is a stable move.
        In either case, removing $ t $ stones is a stable move and
        thus $ x'_1 = x_1 - R(r_1 - 1) $ and $x'_2 = x_2$ hold.  
        Here, $ R(x'_1 - x'_2) = R(x_1 - x_2 - R(r_1 - 1)) = 1 $  and  $ R(x'_2 - x'_3) = R(x_2 - x_3) = k $,  
        so $ (x'_1, x'_2, x'_3) $ satisfies (3) of the definition of $k$-good.
        By the induction hypothesis, we conclude that $ S' \in \mathbb{P} $.
        \item When $ x_1=x_2 $: \cnum{9}\\
        % Since $ r_1 = R(x_1 - x_2) = 0 $, it follows that $ t = R(r_1 - 1) = R(0 - 1) = k $.  
        % So removing $ t $ stones is not a stable move.
        In this case, $ x'_1 = x_2 $ and $ x'_2 = x_1 - k $ hold.
        Here,  $ R(x'_2 - x'_3) = R(x_1 - k - x_3) = R(x_2 - x_3 - k) = 0 $ and $ R(x'_1 - x'_2) = R(x_2 - (x_1 - k)) = R(k) = k $,  
        so $ (x'_1, x'_2, x'_3) $ satisfies (1) of the definition of $k$-good.
        By the induction hypothesis, we conclude that $ S' \in \mathbb{P} $.
    \end{itemize}
\end{enumerate}

\vspace{4pt}\noindent
\textbf{Sufficiency.}
(The position $S$ satisfies one of the conditions of Theorem~\ref{theorem3} $\implies$ The position $S$ is a P-position)

If $S$ satisfies one of the conditions of Theorem~\ref{theorem3} and the move is stable, by Lemma~\ref{claim3} and the induction hypothesis, $S'$ is an N-position and thus $S$ is a P-position.
Therefore, it suffices to discuss the case where the move is not stable, i.e., $t > x_1-x_2$.

\vspace{4pt}\noindent
\textbf{Case 1: $\beta(S)$ is even.}

\noindent
We will show that if $ R(x_1 - x_2) = 0 $, the position $ S $ is a P-position.  
In position $ S $, when $ x_3 \geq 2 $ and $ \beta(S) $ is even,  
it follows that $ x_3 = x_4 \geq 2 $,  
so in the follower position $ S' $, we always have $ x'_3 \geq 2 $.  
% If $x_1-x_2 \geq k+1$, removing $t$ stones is a stable move, so $S'$ is an N-position by Claim~\ref{claim3} and the induction hypothesis.
The assumption $t > x_1-x_2$ implies $x_1 = x_2$.
There are two possible cases to consider.
\begin{enumerate}
    \item When $ x_1-t\geq x_3$: \cnum{10}\\
    % In this case, after removing stones in operation A, the number of stones in the leftmost heap is less than that in the second heap, 
    % but remains at least as large as that in the third heap.
    In this case, $x_1' = x_2$, $x_2' = x_1-t$, and $ x'_j = x_j $ for all $ j \geq 3$,
    % Removing $ t $ stones is not a stable move, so $ x'_1 = x_2 $.
    % Also, since $ x_1-t\geq x_3$, $ x'_2 = x_1 - t $.  
    where $ \beta(S') = \beta(S) $ is even.  
    Here, $ R(x'_1 - x'_2) = R(x_2 - (x_1 - t)) = R(x_2 - x_1 + t) = t \neq 0 $,  
    so by the induction hypothesis, $ S' \notin \mathbb{P} $.
    \item When $ x_1-t < x_3$: \cnum{11}\\
    % After removing stones in operation A,  the number of stones in the leftmost heap becomes less than that in the third heap.  
    Since $ x_3 = x_4 $ holds, we have $ x'_j = x_{j+1} $ for $ 1 \leq j \leq 3 $.  
    Thus, $ \beta(S') = \beta(S) - 1 $ is odd.  
    Here,  $ R(x'_2 - x'_3) = R(x_3 - x_4) = 0 $.  
    Also, since $ x_1 - t < x_3 $,  it follows that $ x'_1 - x'_2 = x_2 - x_3 = x_1 - x_3 < t \leq k $,  
    so $ R(x'_1 - x'_2) \neq k $.  
    Therefore, $ (x'_1, x'_2, x'_3) $ is not $k$-good.
    By the induction hypothesis, we conclude that $ S' \notin \mathbb{P} $.
\end{enumerate}

\vspace{4pt}\noindent
\textbf{Case 2: $\beta(S)$ is odd.}

\noindent
We will prove that if $(x_1, x_2,x_3)$ is $k$-good, then the position $S$ is a P-position. 
We will show that $S'\notin \mathbb{P}$ if $(x_1, x_2)$ satisfies (1), (2) or (3) of the definition of $k$-good, in that order.
By the assumption $x_1 - x_2 < t \leq k$, we have $r_1 =x_1 - x_2$.
\begin{enumerate}
    \item When $R(x_2-x_3)=0$ and $x_1-x_2=k$: \cnum{12}\\
    In this case, any move is stable.
    % Since $ x_1 - x_2 = k $, removing $ t$ stones is a stable move.
    % Therefore, $ x'_3 = x_3 \geq 2 $ and $ \beta(S') = \beta(S) $, which is odd.  
    % Here,  $ R(x'_2 - x'_3) = R(x_2 - x_3) = 0 $ and $ R(x'_1 - x'_2) = R(x_1 - x_2 - t) \neq k $,  
    % so $ (x'_1, x'_2, x'_3) $ is not $k$-good.
    % By the induction hypothesis, we conclude that $ S' \notin \mathbb{P} $.
    \item When $1\leq R(x_2-x_3)\leq k-1$ and $x_1-x_2=0$: \\
    % Since $ x_1 - x_2 = 0 $, removing $ t $ stones is not a stable move.
    % after removing stones, the number of stones in the leftmost heap is less than that in the second heap.
    In this case, $ x'_1 = x_2=x_1$ holds.
    \begin{itemize}
        \item When $x_1-t\geq x_3$: \cnum{13}\\
        In this case, $ x'_2 = x_1 - t $ and $ x'_j = x_j $ for all $j \geq  3$.
        It follows that $ x'_3 = x_3 \geq 2 $ and $ \beta(S') = \beta(S) $ is odd.
        It remains to show that $(x'_1,x'_2,x'_3)$ is not $k$-good, which implies $S' \notin \mathbb{P}$ by the induction hypothesis.
        % Assume to the contrary that $(x'_1,x'_2,x'_3)$ is $k$-good.
        Since $R(x'_1-x'_2)=R(x_2-(x_1-t))=t\neq 0$, $(x'_1,x'_2,x'_3)$ does not satisfy (2) of the definition of $k$-good.
        To satisfy (1) or (3) of the definition of $k$-good, it is necessary that $R(x_1'-x_2')+R(x_2'-x_3') \in \{k,k+1\}$, i.e., $R(x_1'-x_3') \in \{0,k\}$.
        However, $R(x'_1-x'_3) = R(x_2-x_3) \notin \{0,k\}$ by the assumption.  
% 
%         % So, $(x'_1,x'_2,x'_3)$ satisfies (1) or (3) of the definition of $k$-good.
%         % Thus, $R(x'_1-x'_2)=k$ or $1$.
% 
%         -If $R(x'_1-x'_2)=k$: \\
%         Since $1\leq R(x'_2-x'_3)\leq k-1$, we have $R(x'_2-x'_3)=R(x_2-x_3-k)\neq 0$.
%         So, $(x'_1,x'_2,x'_3)$ does not satisfy (1) of the definition of $k$-good.
% 
%         -If $R(x'_1-x'_2)=t=1$: \\
%         Since $1\leq R(x_2-x_3)\leq k-1$, we have $R(x'_2-x'_3)=R(x_2-x_3-1)\neq k$.
%         So, $(x'_1,x'_2,x'_3)$ does not satisfy (3) of the definition of $k$-good.
%         
%         Thus, $(x'_1,x'_2,x'_3)$ is not $k$-good and by the induction hypothesis, we conclude that $S' \notin \mathbb{P}$.
        \item When $ 2 \leq x_1-t < x_3$ or $x_1-t < 2 \leq x_4$: \cnum{14}\\
        In both cases, since $ x_1 - t < x_3 $, we have $ x'_2 = x_3 $.  
        Additionally, $ x'_3 \geq 2 $.  
        Here, $ R(x'_1 - x'_2) = R(x_2 - x_3) \neq 0 $.
        If $\beta(S')$ is even, then $S' \not\in \mathbb{P}$ by the induction hypothesis.
        Assume $\beta(S')$ is odd.
        It remains to show that $(x'_1,x'_2,x'_3)$ is not $k$-good.
        % Assume to the contrary that $(x'_1,x'_2,x'_3)$ is $k$-good.
        Since $x_1'=x_2$ and $x_2'=x_3$, we have $1 \leq R(x_1'-x_2') \leq k-1$.
        Thus, neither (1) nor (2) of the definition of $k$-good is satisfied.
        On the other hand, noting that $x_1 = x_2 > x_3$,
        $0 \leq x_2'-x_3' = x_3 - \max\{x_4, x_1 - t\} \leq x_3 - (x_1 - t) \leq t - 1 < k$ holds.
        That is, (3) of the definition of $k$-good is not satisfied either.
        \item When $x_1-t <2$ and $x_4<2$: \cnum{15}\\
        Also in this case, $ x'_2 = x_3 $ holds because $ x_1 - t < 2 \leq x_3 $. 
        Additionally, since $ x_4<2 $, it follows that $ x'_3 \leq 1 $.  
        Moreover, since $ x_1 - t<2 $, we have $ x_1 < t + 2 \leq k + 2 $ and $ 2 \leq x_3 < x_1 \leq k + 1 $.  
        Therefore, since $ k \geq 2 $, we obtain $ R(x'_2) = R(x_3) \geq 2 $.  
        Here, since $ R(x'_1 - x'_2) = R(x_2 - x_3) \neq 0 $, $ (x_1', x_2') $ does not satisfy (2) of the definition of $k$-nice.  
        Thus, regardless of whether $ \beta(S') $ is odd or even, by the induction hypothesis, $ S' \notin  \mathbb{P} $.
    \end{itemize}
    \item When $R(x_2-x_3)=k$ and $x_1-x_2=1$:
    \begin{itemize}
        \item When $ t=1$: \cnum{16}\\
        Since $t = x_1 - x_2$, this is a stable move.
% 
%         Since $ x_1 - x_2 = 1 $, after removing stones,  
%         the number of stones in the leftmost heap and the second heap become equal,  
%         resulting in $ x'_1 = x'_2 = x_2 $.  
%         Here,  $ R(x'_2 - x'_3) = R(x_2 - x_3) = k $,  
%         but $ R(x'_1 - x'_2) = 0 \neq 1 $.  
%         Therefore, $ (x'_1, x'_2, x'_3) $ is not $k$-good.
        \item When $ 2\leq t \leq \min\{x_1, k\} $: \cnum{17}\\
        Since $ x_1 - x_3 \geq k + 1 $, we have $x_1' = x_2$, $x_2' = x_1 - t$, and $ x'_j = x_j $ for all $j \geq 3$.
        Here, $ x'_3 = x_3 \geq 2 $ and $ \beta(S') = \beta(S) $, which is odd.  
        It remains to show that $ (x'_1, x'_2, x'_3) $ is not $k$-good.
        Here, $R(x_2' - x_3') = R(x_1 - t - x_3) = R(x_2 + 1 - t - x_3) = R(-t)$ is within the range from $1$ to $k - 1$, and $R(x_1' - x_2') = R(x_2 - x_1 + t) = t - 1 \neq 0$.
        Therefore, $(x_1', x_2', x_3')$ is not $k$-good.    
        By the induction hypothesis, we conclude that $S' \notin \mathbb{P}$.
    \end{itemize}
\end{enumerate}

This completes the proof of Theorem~\ref{theorem3}.
\section{Misère Greedy Nim}

\subsection{The Rule of Misère Greedy Nim}
The rules of the misère Greedy Nim are as follows.
The starting position consists of several heaps of stones, with each heap containing an arbitrary number of stones.
Two players take turns removing stones from the heap with the largest number of stones.

Obviously misère Greedy Nim is identical to misère $k$-bounded Greedy Nim if every heap has at most $k$ stones.
Therefore, using the result on misère $k$-bounded Greedy Nim presented in the previous section, one can establish a complete solution to the game.
This section gives a direct analysis on misère Greedy Nim.

\subsection{Theorem and Proof}
% Here, we define $\alpha(S)$ as follows:
\begin{definition}
    For a position $S=(x_1,x_2,\dots,x_n)$, let
    \[
      \alpha(S)= 
          \begin{cases}
              0              &  \text{if $x_1=0$,}\\
              \max\{j:x_j=x_1\} &  \text{if $x_1 \neq 0$}, \nonumber
          \end{cases}
      \]
which is the number of  repetitions of the value of $x_1$ in the position $S$.
\end{definition}
The normal play Greedy Nim has been analyzed by Albert and Nowakowski~\cite{albert2004nim}.
\begin{theorem}[\cite{albert2004nim}]
\label{theorem4}
    A position $S$ is a P-position in normal play Greedy Nim if and only if $\alpha(S) $ is even.
\end{theorem}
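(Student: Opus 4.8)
The plan is to prove Theorem~\ref{theorem4} by strong induction on the total number of stones in $S$, in the same style as the misère arguments above. The base case is the empty position $S_0=(0,\dots,0)$, for which $\alpha(S_0)=0$ is even and $S_0$ is a P-position in normal play (the player to move cannot move and loses). For the inductive step I take $S\neq S_0$, so that $x_1\ge 1$ and hence $\alpha(S)\ge 1$, and I assume the theorem for every position with strictly fewer stones. The key structural observation is that every legal move removes $t$ stones, $1\le t\le x_1$, from a heap of size $x_1$, so the reduced heap ends with strictly fewer than $x_1$ stones; consequently every follower $S'$ has strictly fewer stones than $S$ (so the induction hypothesis applies to it), and $\alpha(S')$ is governed entirely by how many maximal heaps survive the move.

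First I would handle the direction ``$\alpha(S)$ even $\implies S\in\mathbb{P}$''. Since $x_1\ge 1$ and $\alpha(S)$ is even, in fact $\alpha(S)\ge 2$. Any move reduces one of the (at least two) maximal heaps below $x_1$, so at least one heap of size $x_1$ remains and the follower satisfies $\alpha(S')=\alpha(S)-1$, which is odd. By the induction hypothesis $S'\notin\mathbb{P}$, i.e.\ $S'$ is an N-position. As every follower is an N-position, $S$ is a P-position.

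Next I would handle ``$\alpha(S)$ odd $\implies S$ is an N-position'' by exhibiting one follower in $\mathbb{P}$. If $\alpha(S)\ge 3$, removing a single stone from a maximal heap produces a follower with $\alpha(S')=\alpha(S)-1$, which is even, hence in $\mathbb{P}$ by the induction hypothesis. If $\alpha(S)=1$, let $m=\alpha(x_2,x_3,\dots,x_n)$. When $m$ is even I would remove all $x_1$ stones from the unique maximal heap, producing a follower with $\alpha$ equal to $m$ (this follower is $S_0$ in the degenerate case $x_2=0$). When $m$ is odd --- which forces $x_2\ge 1$ --- I would instead remove $x_1-x_2\ge 1$ stones, producing a follower whose maximal value $x_2$ has multiplicity $m+1$. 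Either way the chosen follower has even $\alpha$, so it lies in $\mathbb{P}$ by the induction hypothesis, and $S$ is an N-position.

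The one step I expect to require genuine care is the subcase $\alpha(S)=1$: here the comfortable ``decrease $\alpha$ by one'' move used everywhere else is unavailable, and one must pick exactly how far to shrink the single tallest heap --- down to the second-largest value, or all the way to $0$ --- according to the parity of $m$, and separately check the degenerate case $x_2=0$ in which the board empties to $S_0$. Everything else is bookkeeping: verifying that a reduced maximal heap can never remain maximal (so $\alpha(S')=\alpha(S)-1$ exactly whenever $\alpha(S)\ge 2$), that a legal move always exists when $S\neq S_0$, and that follower totals strictly decrease so the induction hypothesis is always available.
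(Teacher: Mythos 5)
Your proof is correct. Note that the paper does not prove Theorem~\ref{theorem4} itself but cites it from Albert and Nowakowski, so there is no in-paper argument to compare against; your induction on the total number of stones, with the key observation that a reduced maximal heap never remains maximal (so $\alpha(S')=\alpha(S)-1$ whenever $\alpha(S)\ge 2$), is the standard argument and is consistent in style with the parity reasoning the paper uses for Theorem~\ref{theorem5}. The only delicate subcase, $\alpha(S)=1$, is handled properly: you correctly split on the parity of $m=\alpha(x_2,\dots,x_n)$, use $x_1>x_2$ to guarantee the move of $x_1-x_2$ stones is legal, and cover the degenerate $x_2=0$ case.
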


We establish the misère play counterpart of Theorem~\ref{theorem4}.
% In this section, we show that in misère Greedy Nim, the necessary and sufficient condition for a position $S$ to be a P-position is as follows:
\begin{theorem}
\label{theorem5}
    A position $S = (x_1, x_2, \dots, x_n)$ is a P-position in misère Greedy Nim if and only if one of the following holds:
    \begin{itemize}
        \item $x_1\leq 1$ and $\alpha(S)$ is odd.
        \item $x_1\geq 2$ and $\alpha(S)$ is even.
    \end{itemize}
\end{theorem}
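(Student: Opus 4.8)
The plan is to prove Theorem~\ref{theorem5} by induction on the total number of stones, mirroring the structure used for Theorem~\ref{theorem3}. For the base cases, if the total is $0$ then $S$ is an N-position with $x_1 = 0 \le 1$ and $\alpha(S) = 0$ even, so neither clause applies; if the total is $1$ then $S$ is a P-position with $x_1 = 1$ and $\alpha(S) = 1$ odd, matching the first clause. From now on assume the total is at least $2$, so $x_1 \ge 1$, and split on whether $x_1 \le 1$ (equivalently $x_1 = 1$) or $x_1 \ge 2$.

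For the inductive step I would again find it convenient to classify a move removing $t$ stones from the top heap as \emph{stable} when $x_1 - t \ge x_2$, so that $S' = (x_1 - t, x_2, \dots, x_n)$ keeps its first heap on top; then $\alpha(S')$ either equals $\alpha(S)$ (if $x_1 - t > x_2$, which when $x_1 = x_2$ forces $t = 0$, impossible, so in the $x_1 = x_2$ case every move is non-stable) or drops behaviour must be read off directly. The key observation is a monotonicity/parity lemma analogous to Lemma~\ref{claim3}: from a position satisfying the P-condition, no stable move lands in another position satisfying it, because a stable move changes $x_1$ only, and hence changes the parity status of the relevant invariant appropriately — when $x_1 \ge 3$ after the move it changes $\alpha$ only if it makes $x_1 - t$ equal to $x_2$, otherwise $\alpha$ is unchanged while nothing else is, so $S'$ fails the condition; I would state and prove this helper first. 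With it, sufficiency reduces entirely to analysing non-stable moves, i.e. the case $x_1 = x_2$ (and more generally $t > x_1 - x_2$, which given greedy play forces $x_1 = x_2$ unless $x_1 - x_2$ stones are removed; since $t$ is unbounded in Greedy Nim, the only genuinely new situations are those with $x_1 = x_2$).

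The main case analysis: (a) $x_1 \le 1$, $\alpha(S)$ odd: here $x_1 = 1$, all heaps equal to $1$ number $\alpha(S)$ of them which is odd; any move removes the single stone from some top heap, giving $\alpha(S') = \alpha(S) - 1$ even with new top value $\le 1$ — so $S'$ fails the P-condition, confirming $S$ is a P-position; conversely if $\alpha(S)$ is even one checks a move to a (P)-child exists. (b) $x_1 \ge 2$, $\alpha(S)$ even: if $x_1 > x_2$, $\alpha(S) = 1$ is odd, contradiction, so in fact $x_1 = x_2$; a move removing $t$ stones from the top heap makes that heap $x_1 - t$; if $x_1 - t \ge x_2$ is impossible since $x_1 = x_2$, so the move is non-stable and the new top value is $x_2 = x_1$, with $\alpha(S')$ being $\alpha(S) - 1$ (odd) if $x_1 - t < x_2$, which always holds — wait, one must handle $x_1 - t$ possibly still being $\ge 2$ or dropping to $\le 1$: in either subcase $\alpha(S') = \alpha(S) - 1$ is odd, and the top value $x_2 = x_1 \ge 2$, so $S'$ fails the $x_1 \ge 2$ clause and (since its top is $\ge 2 > 1$) cannot satisfy the $x_1 \le 1$ clause either; hence $S' \notin \mathbb{P}$ and $S$ is a P-position. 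For necessity with $\alpha(S)$ odd and $x_1 \ge 2$ (or $\alpha(S)$ even and $x_1 \le 1$) I exhibit a winning move: remove enough stones from the top heap to either equalize with $x_2$ from above (making $\alpha$ even again) or, when $x_1 \ge 2$ and $\alpha(S) = 1$, simply drop the top heap's count to match $x_2$ if $x_2 \ge 2$, or to $1$ / to $0$ as the parity of the count of $1$'s dictates when $x_2 \le 1$.

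\textbf{Expected obstacle.} The delicate point, as in the bounded case, is the boundary between the two clauses: a non-stable move from $x_1 = x_2 \ge 2$ can push the new top value $x_1 - t$ down to $1$ or $0$, so the follower $S'$ may have $x_3$-type structure putting it under the \emph{other} clause of the theorem, and one must track $\alpha(S')$ carefully in each sub-case (new top still equals the old common value $x_1$, versus all top heaps exhausted). Handling the transition $x_1 \ge 2 \to x_1' \le 1$ cleanly — checking that the parity of $\alpha$ flips in exactly the way that makes $S'$ fail whichever clause now applies — is where the argument needs the most care; I would organize it by the value of $\max\{x_2, x_1 - t\}$ exactly as Figure~\ref{fig:fig1}'s case split does for the bounded game.
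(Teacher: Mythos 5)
Your plan is a workable route to Theorem~\ref{theorem5}, but it is not the route the paper takes, and the comparison is instructive. You do a self-contained induction that re-establishes the P-position characterization from scratch, case-splitting on $x_1\leq 1$ versus $x_1\geq 2$ and arguing about $\alpha(S')$ move by move. The paper instead rephrases the theorem in the language of Definition~\ref{def1}: positions with $x_1\leq 1$ are exactly the singular ones and positions with $x_1\geq 2$ are standard, and then leans on the known normal-play solution (Theorem~\ref{theorem4}). This buys a much shorter argument: when $x_2\geq 2$ every follower still has a heap of size at least $2$, so by induction every follower is standard and hence $S$ is standard, with no need to re-derive which followers are P-positions; when $x_1\geq 2$ and $x_2\leq 1$ it suffices to exhibit the two followers obtained by emptying the top heap or leaving one stone in it, whose $\alpha$-values differ by one, so one of them is a P-position and $S$ is an N-position in both conventions. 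Your direct approach is more elementary but duplicates work that Theorem~\ref{theorem4} already did, and your ``stable move'' machinery is essentially vacuous here (a position satisfying the P-condition with $x_1\geq 2$ has $x_1=x_2$, so it admits no stable move at all).

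One concrete step in your necessity argument would fail as written: for $x_1\geq 2$, $\alpha(S)=1$ and $x_2\geq 2$ you propose to ``drop the top heap's count to match $x_2$,'' but if the number of heaps already equal to $x_2$ is even this produces an odd $\alpha$ and lands on an N-position (e.g.\ $(3,2,2)\to(2,2,2)$). The fix is the same parity adjustment you already invoke for $x_2\leq 1$: reduce the top heap to $x_2$ or to $x_2-1$ according to the parity of $\#\{j\geq 2 : x_j=x_2\}$, so that the resulting $\alpha$ is even while the maximum remains $x_2\geq 2$. With that correction, and with the $\alpha(S)\geq 3$ subcase spelled out (any move works there), your induction closes.
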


% From Theorems \ref{theorem4} and \ref{theorem5}, we can say that a position $x_1\leq1$ is a singular position and a position $x_1\geq2$ is a standard position.
% Using the result in normal play, i.e. Theorem \ref{theorem4}, we prove Theorem \ref{theorem5}.

\begin{proof}
One can rephrase the theorem as follows:
% For any position $S = (x_1, x_2, \dots, x_n)$ in misère Greedy Nim, we prove the following by induction on the total number of stones:
\def\labelenumi{(\theenumi)}
\begin{enumerate}
    \item If $x_1 \leq 1$, $S$ is a singular position; that is, the outcomes of the normal play and the misère play disagree.
    % that is, when there is no heap in $S$ with at least 2 stones, Theorem 5 holds.
    \item If $x_1 \geq 2$, $S$ is a standard position; that is, the outcomes of the normal play and the misère play are the same.
    % that is, when there is at least one heap in $S$ with at least 2 stones, outcome of the normal play and the outcome of the misère play are the same.
\end{enumerate}
We show the above by induction on the total number of stones in a position.

(1)
When $x_1 = 1$, the only possible operation for a position $S$ is to remove one heap. 
After removing the stones, $\alpha(S')$ and $\alpha(S)$ have different parities.
Obviously $S$ is a P-position when $\alpha(S)=1$ and then the claim follows inductively.

(2)
When $x_1 \geq 2$ and $x_2 \leq 1$, $\alpha(S)=1$ is odd.
Consider the two possible follower positions $S'$ and $S''$ obtained by removing all stones from the leftmost pile and removing all but one stone, respectively.
Both follower positions have no heaps with two or more stones and $\alpha(S'')=\alpha(S')+1$ holds.
From (1), one of these positions is a P-position.  
Therefore, $S$ is an N-position.

When $x_2 \geq 2$, every follower position of $S$ has at least one heap with at least 2 stones.  
By the induction hypothesis, for every follower position of $S$, the misère and normal outcomes are equal.  
Hence, the misère and normal outcomes of position $S$ are also equal.
\end{proof}

% Furthermore, when a position is denoted by $S = (x_1, x_2, \ldots, x_n)$,  
% Greedy Nim can be regarded as a $k$-Bounded Greedy Nim where the constant $k$ is at least $x_1$.  
% Assuming that the constant $k$ is at least $x_1$,  
Theorem~\ref{theorem5} can also be derived by reformulating Theorem~\ref{theorem3} assuming that the constant $k$ is at least $x_1$.

\section{Conclusion}

\subsection{Results}
In this paper, we present conditions that efficiently determine the outcome of the misère Bounded Greedy Nim and Greedy Nim, and provide proofs for these conditions.
For misère Bounded Greedy Nim, the proof is constructed by dividing game positions into those where the results of the normal play can be directly applied and those where they cannot.
For misère Greedy Nim, we conduct an analysis by completely identifying the singular and standard positions.
Furthermore, we show that the results for misère Greedy Nim can be derived using the results from misère Bounded Greedy Nim.

\subsection{Future work}
One of the future challenges is the development of algorithms that can efficiently compute the Grundy number of Greedy Nim and Bounded Greedy Nim.
The Grundy values, proposed by Sprague~\cite{sprague1935uber} and Grundy~\cite{grundy}, is a value that enables the determination of game outcomes. 
Moreover, Grundy values often reveal hidden properties of a game, making them extremely important for game analysis.
Although Grundy values can be computed recursively by tracing back from terminal positions, this approach requires an enormous amount of computation time. 
Therefore, finding ways to compute Grundy values efficiently without recursive calculation—and ideally, discovering closed-form expressions for them—is considered a major goal in the study of impartial games. 
Grundy values for Greedy Nim have been investigated by Schwartz ~\cite{Schwartz01111971}.
However, even for the two heap Greedy Nim, where the number of stones in the first heap is $a$ and that in the second heap is $b$, no algorithm has yet been discovered that computes the Grundy value in time $\log a + \log b$~\cite{albert2004nim}.
Moreover, we aim to analyze a variety of games for which a complete analysis has not yet been achieved in their misère play, such as Dawson's Chess~\cite{dawson1935caissa} and Treblecross (one-dimensional tic-tac-toe)~\cite{berlekamp2004winning}.

% \input{docs/acknowledgement.tex}

%%%%%%%%%%%%%%%%%%%% 参考文献 %%%%%%%%%%%%%%%%%%%%
%\nocite{*}
\bibliographystyle{plain}
\bibliography{ref}

\begin{thebibliography}{10}

\bibitem{albert2004nim}
Michael~H. Albert and Richard~J. Nowakowski.
\newblock Nim restrictions.
\newblock {\em Integers: Electronic Journal of Combinatorial Number Theory}, 4(G01):G01, 2004.

\bibitem{berlekamp2004winning}
Elwyn~R. Berlekamp, John~H. Conway, and Richard~K. Guy.
\newblock {\em Winning ways for your mathematical plays, volume 4}.
\newblock AK Peters/CRC Press, 2004.

\bibitem{bouton1901nim}
Charles~L. Bouton.
\newblock Nim, a game with a complete mathematical theory.
\newblock {\em Annals of mathematics}, 3(1/4):35--39, 1901.

\bibitem{dawson1935caissa}
Thomas~Rayner Dawson.
\newblock {\em Caissa's Wild Roses}.
\newblock Number~1. TR Dawson, 1935.

\bibitem{dudeney2002canterbury}
Henry~Ernest Dudeney.
\newblock {\em The Canterbury Puzzles}.
\newblock Courier Corporation, 2002.

\bibitem{grundy}
P.~M. Grundy.
\newblock Mathematics and games.
\newblock {\em Eureka}, 2:6--8, 1939.

\bibitem{Grundy_Smith_1956}
P.~M. Grundy and C.~A.~B. Smith.
\newblock Disjunctive games with the last player losing.
\newblock {\em Mathematical Proceedings of the Cambridge Philosophical Society}, 52(3):527--533, 1956.

\bibitem{locke2021amalgamation}
S.~C. Locke and B.~Handley.
\newblock Amalgamation {Nim}.
\newblock {\em Integers: Electronic Journal of Combinatorial Number Theory}, 21, 2021.

\bibitem{PLAMBECK2008593}
Thane~E. Plambeck and Aaron~N. Siegel.
\newblock Misère quotients for impartial games.
\newblock {\em Journal of Combinatorial Theory, Series A}, 115(4):593--622, 2008.

\bibitem{Schwartz01111971}
Benjamin~L. Schwartz.
\newblock Some extensions of {Nim}.
\newblock {\em Mathematics Magazine}, 44(5):252--257, 1971.

\bibitem{sprague1935uber}
R.~Sprague.
\newblock \"{U}ber mathematische kampfspiele.
\newblock {\em Tohoku Mathematical Journal, First Series}, 41:438--444, 1935.

\bibitem{welter1952advancing}
C.~P. Welter.
\newblock The advancing operation in a special abelian group.
\newblock In {\em Indagationes Mathematicae (Proceedings)}, volume~55, pages 304--314. Elsevier, 1952.

\bibitem{wythoffgame}
W.~A. Wythoff.
\newblock A modification of the game of {Nim}.
\newblock {\em Nieuw Archief voor Wiskunde}, 7(2):199--202, 1907.

\bibitem{xu2018bounded}
Rongxing Xu and Xuding Zhu.
\newblock Bounded greedy {Nim}.
\newblock {\em Theoretical Computer Science}, 746:1--5, 2018.

\bibitem{yamasaki1980misere}
Yohei Yamasaki.
\newblock On misere {Nim}-type games.
\newblock {\em Journal of the Mathematical Society of Japan}, 32(3):461--475, 1980.

\end{thebibliography}

\end{document}